\def\nn{\nonumber}
\newcommand{\nf}[2]{\nicefrac{#1}{#2}}
\newcommand{\fig}[1]{Fig.~\ref{fig:#1}}
\newcommand{\eq}[1]{(\ref{eq:#1})}
\newcommand{\by}[1]{\! \times \!}
\def\ra{\!\rightarrow\!}
\newcommand{\mc}[1]{\mathcal{#1}}
\newcommand{\ob}[1]{\bar{#1}}
\newcommand{\sm}{\!\setminus\!}
\newcommand{\overbar}[1]{\ob{#1}}
\begin{document}

\title{Probability Mass Exclusions and the Directed Components of Pointwise Mutual Information}


\author{Conor~Finn 
  and~Joseph~T.~Lizier
  \thanks{C.~Finn (email:~conor.finn@sydney.edu.au),
    and J. T. Lizier (email: joseph.lizier@sydney.edu.au)
    are with the Complex Systems Research Group and Centre for Complex Systems, Faculty of Engineering \& IT, The University of Sydney,
    NSW 2006, Australia.}
  \thanks{C.~Finn is also with CSIRO Data61, Marsfield NSW 2122, Australia}
  \thanks{Manuscript received \today}} 


\maketitle

\begin{abstract}
  This paper examines how an event from one random variable provides pointwise mutual information about an event from another variable via probability mass exclusions.  We start by introducing probability mass diagrams, which provide a visual representation of how a prior distribution is transformed to a posterior distribution through exclusions.  With the aid of these diagrams, we identify two distinct types of probability mass exclusions---namely informative and misinformative exclusions.  Then, motivated by Fano's derivation of the pointwise mutual information, we propose four postulates which aim to decompose the pointwise mutual information into two separate informational components:\ a non-negative term associated with the informative exclusion and a non-positive term associated with the misinformative exclusions.  This yields a novel derivation of a familiar decomposition of the pointwise mutual information into entropic components.  We conclude by discussing the relevance of considering information in terms of probability mass exclusions to the ongoing effort to decompose multivariate information.
\end{abstract}

\begin{IEEEkeywords}
  entropy, mutual information, pointwise, information decomposition
\end{IEEEkeywords}

\IEEEpeerreviewmaketitle

\section{Introduction}

\IEEEPARstart{C}{onsider} three random variables $X$, $Y$, $Z$ with finite discrete state spaces
$\mathcal{X}$, $\mathcal{Y}$, $\mathcal{Z}$, and let $x$, $y$, $z$ represent events that have
occurred simultaneously in each space.  Although underappreciated in the current reference texts on
information theory \cite{cover2012,mackay2003}, both the entropy and mutual information can be
derived from first principles as fundamentally \emph{pointwise} quantities that measure the
information content of individual events rather than entire variables. The pointwise entropy
$h(x)=-\log_b(x)$, also known as the Shannon information content, quantifies the information content
of a single event $x$, while the pointwise mutual information
\begin{equation}
  i(x;y) = \log_b \frac{p(y|x)}{p(y)}
  = \log_b \frac{p(x,y)}{p(x)p(y)}
  = \log_b \frac{p(x|y)}{p(x)},
\end{equation}
quantifies the information provided by $x$ about $y$, or vice versa.\footnote{The prefix
  \emph{pointwise} has only recently become typical; both \cite{woodward1953} and \cite{fano1961}
  both referred to the pointwise mutual information as the \emph{mutual information} and then
  explicitly prefix the \emph{average} mutual information.}  To our knowledge, the first explicit
reference to pointwise information is due to Woodward and Davies~\cite{woodward1953, woodward1952}
who noted that average form of Shannon's entropy ``tempts one to enquire into other simpler methods
of derivation [of the pointwise entropy]''~\cite[p.~51]{woodward1953}.  Indeed, using two axioms
regarding the addition of information, they derived the pointwise mutual
information~\cite{woodward1952}.  Fano further formalised this idea by deriving the quantities from
four postulates that ``should be satisfied by a useful measure of information''~\cite[p.\
31]{fano1961}.

Similar to the average entropy, the pointwise entropy is non-negative.  On the contrary, unlike the
average mutual information, the pointwise mutual information is a signed measure.  A positive value
corresponds to the event $y$ raising the posterior $p(x|y)$ relative to the prior $p(x)$ so that
when $x$ occurs one would say that $y$ was informative about $x$.  In contrast, a negative value
corresponds to the event $y$ lowering the posterior $p(x|y)$ relative to the prior $p(x)$, hence
when $x$ occurs one would say that $y$ was misinformative about $x$.  Nonetheless, this
misinformation is a purely pointwise phenomena since (as observed by both Woodward and Fano) the
average information provided by the event $y$ about the variable $X$ is non-negative,
$ I(X,y) = \big\langle i(x;y) \big\rangle_{x \in \mc{X}} \geq 0$.  It follows trivially, that the
(dual average) mutual information is non-negative,
$I(X;Y) = \big\langle i(x;y) \big\rangle_{x \in \mc{X}, \, y \in \mc{Y}} \geq 0$.

\section{Information and Probability Mass Exclusions}
\label{sec:prob_mass_exc}

By definition, the pointwise information provided by $y$ about $x$ is associated with a change from
the prior $p(x)$ to the posterior $p(x|y)$. Ultimately, this change is a consequence of the
\emph{exclusion} of probability mass in the distribution $P(X)$ induced by the occurrence of the
event $y$ and inferred via the joint distribution $P(X,Y)$.  To be specific, when the event $y$
occurs, one knows that the complementary event \mbox{$\ob{y} = \{\mc{Y} \sm y\}$} did not occur;
hence, one can \emph{exclude} the probability mass in the joint distribution $P(X,Y)$ associated
with this complementary event, i.e.\ exclude $P(X,\ob{y})$.  This exclusion leaves only the
probability mass $P(X,y)$ remaining, which can be normalised to obtain the conditional distribution
$P(X|y)$.  A visual representation of how the event $y$ excludes probability mass in $P(X)$ can be
seen in the \emph{probability mass diagram} in \fig{pmass_intro}.

Since the event $x$ has also occurred, the excluded probability mass $P(X,\ob{y})$ can be divided
into two distinct categories:\ the \emph{informative exclusion} $p(\ob{x},\ob{y})$ is the portion of
the exclusion associated with the complementary event $\ob{x}$, while the \emph{misinformative
  exclusion} $p(x,\ob{y})$ is the portion of the exclusion associated with the event $x$.  The
choice of appellations is justified by considering the subsequent two special cases.  The first
special case is a \emph{purely informative exclusion} which, as depicted in \fig{pmass_cases},
occurs when the event $y$ induces exclusions which are confined to the probability mass associated
with the complementary event $\ob{x}$.  Formally, the informative exclusion ${p(\ob{x},\ob{y})}$ is
non-zero while there is no misinformative exclusion as ${p(x,\ob{y})}=0$.  Thus,
${p(x) = p(x,y) + p(x,\ob{y}) = p(x,y)}$, and hence the pointwise mutual information,
\begin{equation}
  \label{eq:pure_inf}
  i(x;y) = \log_b \frac{p(x,y)}{p(x)p(y)} = - \log_b \big( 1-p(\ob{x},\ob{y}) \big)
\end{equation}
is a strictly positive, monotonically increasing function of the size of the informative exclusion
$p(\ob{x},\ob{y})$ for fixed~$p(x)$.

\begin{figure}[t]
  \centering
  \begin{tikzpicture}

  \def\originy{0} 
  \def\height{2.8}
  \def\width{1.2}
  \def\overdrawn{0.2}
  \def\xgap{3}
  
  
  \def\originx{0}
  
  \draw[black, line width=1pt] (\originx,\originy)
    -- (\originx,\originy+\height);
  \draw[black, line width=1pt] (\originx-\overdrawn,\originy+\height)
    -- (\originx+\width+\overdrawn,\originy+\height);
  \draw[black, line width=1pt] (\originx+\width,\originy+\height)
    -- (\originx+\width,\originy);
  \draw[black, line width=1pt] (\originx+\width+\overdrawn,\originy)
    -- (\originx-\overdrawn,\originy);

  \draw[black, line width=0.75pt] (\originx,\originy+7/8*\height)
    -- (\originx+\width+\overdrawn,\originy+7/8*\height);
  \draw[black, line width=1pt] (\originx-\overdrawn,\originy+\height/2)
    -- (\originx+\width+\overdrawn,\originy+\height/2);
  \draw[black, line width=1pt] (\originx-\overdrawn,\originy+1/4*\height)
    -- (\originx+\width+\overdrawn,\originy+1/4*\height);

  \node at (\originx,3/4*\height)[anchor=east] {$x_{1}$};
  \node at (\originx,3/8*\height)[anchor=east] {$x_{2}$};
  \node at (\originx,1/8*\height)[anchor=east] {$x_{3}$};  
  \node at (\originx+\width,15/16*\height)[anchor=west] {$y_{1}$};
  \node at (\originx+\width,11/16*\height)[anchor=west] {$y_{3}$};
  \node at (\originx+\width,3/8*\height)[anchor=west] {$y_{1}$};
  \node at (\originx+\width,1/8*\height)[anchor=west] {$y_{2}$};
  \node at (\originx+\width/2,\originy+\height)[anchor=south] {$P(X,Y)$};
  \node at (\originx+\width/2,\originy+15/16*\height)[anchor=center] {$\scriptstyle 1/8$};
  \node at (\originx+\width/2,\originy+11/16*\height)[anchor=center] {$\scriptstyle 3/8$};
  \node at (\originx+\width/2,\originy+3/8*\height)[anchor=center] {$\scriptstyle 1/4$};
  \node at (\originx+\width/2,\originy+1/8*\height)[anchor=center] {$\scriptstyle 1/4$};

  
  \def\originxtwo{\xgap}
  
  \draw[black, line width=1pt] (\originxtwo,\originy)
    -- (\originxtwo,\originy+\height);
  \draw[black, line width=1pt] (\originxtwo-\overdrawn,\originy+\height)
    -- (\originxtwo+\width+\overdrawn,\originy+\height);
  \draw[black, line width=1pt] (\originxtwo+\width,\originy+\height)
    -- (\originxtwo+\width,\originy);
  \draw[black, line width=1pt] (\originxtwo+\width+\overdrawn,\originy)
    -- (\originxtwo-\overdrawn,\originy);

  \draw[black, line width=0.75pt] (\originxtwo,\originy+7/8*\height)
    -- (\originxtwo+\width+\overdrawn,\originy+7/8*\height);
  \draw[black, line width=1pt] (\originxtwo-\overdrawn,\originy+\height/2)
    -- (\originxtwo+\width+\overdrawn,\originy+\height/2);
  \draw[black, line width=1pt] (\originxtwo-\overdrawn,\originy+1/4*\height)
    -- (\originxtwo+\width+\overdrawn,\originy+1/4*\height);

  \draw[pattern=north east lines, pattern color=gray] (\originxtwo,\originy+\height/2) rectangle
    (\originxtwo+\width,\originy+7/8*\height);
  \draw[pattern=vertical lines, pattern color=gray] (\originxtwo,\originy) rectangle
    (\originxtwo+\width,\originy+\height/4);

  \node at (\originxtwo,3/4*\height)[anchor=east] {$x_{1}$};
  \node at (\originxtwo,3/8*\height)[anchor=east] {$\ob{x_1}\!$};
  \node at (\originxtwo,1/8*\height)[anchor=east] {$\ob{x_1}\!$};  
  \node at (\originxtwo+\width,15/16*\height)[anchor=west] {$y_{1}$};
  \node at (\originxtwo+\width,11/16*\height)[anchor=west] {$\ob{y_{1}}$};
  \node at (\originxtwo+\width,3/8*\height)[anchor=west] {$y_{1}$};
  \node at (\originxtwo+\width,1/8*\height)[anchor=west] {$\ob{y_{1}}$};
  \node at (\originxtwo+\width/2,\originy+\height)[anchor=south] {$P(X,y_1)$};


  \def\originxthree{2*\xgap}


%
%
%
%

    
  \draw[black, line width=1pt] (\originxthree,\originy)
    -- (\originxthree,\originy+\height);
  \draw[black, line width=1pt] (\originxthree-\overdrawn,\originy+\height)
    -- (\originxthree+\width+\overdrawn,\originy+\height);
  \draw[black, line width=1pt] (\originxthree+\width,\originy+\height)
    -- (\originxthree+\width,\originy);
  \draw[black, line width=1pt] (\originxthree+\width+\overdrawn,\originy)
    -- (\originxthree-\overdrawn,\originy);

  \draw[black, line width=1pt] (\originxthree-\overdrawn,\originy+2/3*\height)
    -- (\originxthree+\width+\overdrawn,\originy+2/3*\height);

  \node at (\originxthree,5/6*\height)[anchor=east] {$x_{1}$};
  \node at (\originxthree,2/6*\height)[anchor=east] {$\ob{x_1}\!$};
  \node at (\originxthree+\width,5/6*\height)[anchor=west] {$y_{1}$}; 
  \node at (\originxthree+\width,2/6*\height)[anchor=west] {$y_{1}$};
  \node at (\originxthree+\width/2,\originy+\height)[anchor=south] {$P(X|y_{1})$};
  \node at (\originxthree+\width/2,\originy+5/6*\height)[anchor=center] {$\scriptstyle 1/3$};
  \node at (\originxthree+\width/2,\originy+2/6*\height)[anchor=center] {$\scriptstyle 2/3$};

  
  \node at (\originxtwo-\xgap/2+\width/2 ,1/2*\height)[anchor=center] {$\implies$};
  \node at (\originxthree-\xgap/2+\width/2, 1/2*\height)[anchor=center] {$\implies$};
  
\end{tikzpicture}
  \caption[]{In probability mass diagrams, height represents the probability mass of each joint
    event from $\mc{X} \by \mc{Y}$. \emph{Left}:~the full joint distribution $P(X,Y)$.
    \emph{Centre}:~The occurrence of the event $y_1$ leads to exclusion of the probability mass
    associated with ${\ob{y_1}= \{ y_2, y_3 \}}$.  Since the event $x_1$ occurred, there is an
    informative exclusion ${p(\ob{x_1},\ob{y_1})}$ and a misinformative exclusion ${p(x_1,\ob{y_1})}$,
    represented, by convention, with vertical and diagonal hatching respectively.
    \emph{Right}:~normalising the remaining probability mass yields~$P(X|y_1)$.}
  \label{fig:pmass_intro}
\end{figure}
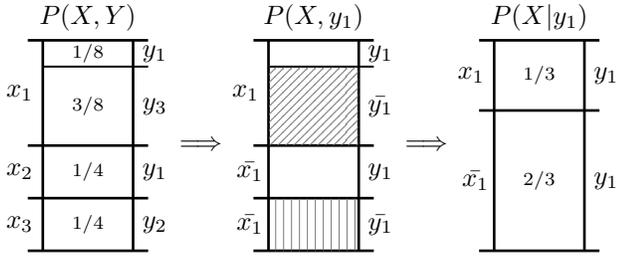

The second special case is a \emph{purely misinformative exclusion} which, as depicted in
\fig{pmass_cases}, occurs when the event $y$ induces exclusions which are confined to the
probability mass associated with the event $x$.  Formally, there is no informative exclusion as
${p(\ob{x},\ob{y})=0}$ while the misinformative exclusion ${p(x,\ob{y})}$ is non-zero. Thus,
${p(y) = 1 - p(\ob{y}) \,=\, 1 - p(x,\ob{y}) - p(\ob{x},\ob{y}) \,= 1 - p(x,\ob{y})}$, and hence,
together with ${p(x,y) = p(x) - p(x,\ob{y})}$, the pointwise mutual information,
\begin{equation}
  \label{eq:pure_mis}
  i(x;y) = \log_b \frac{p(x,y)}{p(x)p(y)}
    = \log_b \frac{1-p(x,\ob{y})/p(x)}{1-p(x,\ob{y})}
\end{equation}
is a strictly negative, monotonically decreasing function of the size of the misinformative
exclusion $p(x,\ob{y})$ for fixed~$p(x)$.






\begin{figure}[t]
  \centering
  \input{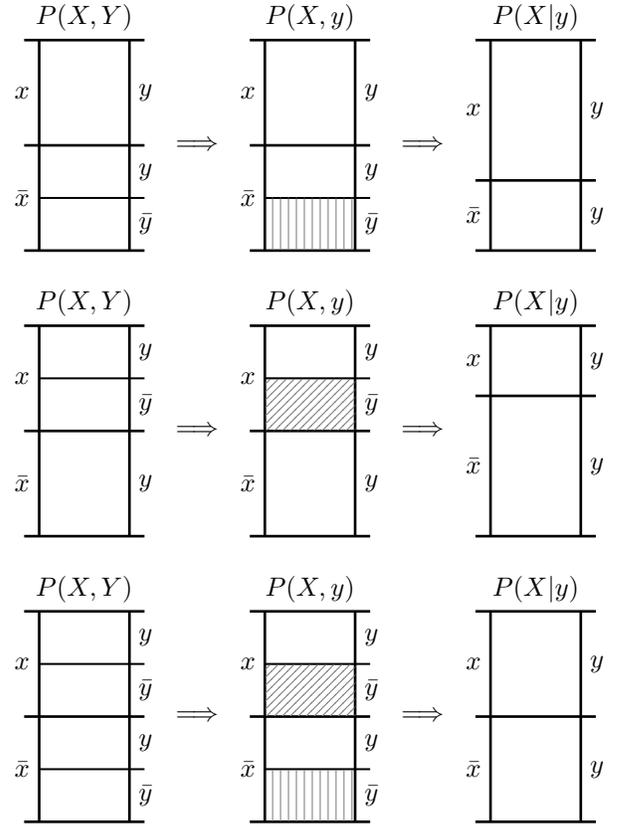}
  \caption{\emph{Top}:\ A purely informative probability mass exclusion, ${p(\ob{x},\ob{y})>0}$ and
    ${p(x,\ob{y})=0}$, leading to ${p(x|y)>p(x)}$ and hence $i(x;y)>0$. \emph{Middle}:\ A purely
    misinformative probability mass exclusion, ${p(\ob{x},\ob{y})=0}$ and ${p(x,\ob{y})>0}$, leading
    to ${p(x|y)<p(x)}$ and hence $i(x;y)<0$.  \emph{Bottom}:\ The general case $p(\ob{x},\ob{y}>0)$
    and ${p(x,\ob{y})>0}$.  Whether $i(x;y)$ turns out to be positive or negative depends on the
    balance of the exclusions.}
  \label{fig:pmass_cases}
\end{figure}

Now consider the general case depicted in \fig{pmass_cases}, where both informative and
misinformative exclusions are present simultaneously.  Given that the purely informative exclusion
yields positive pointwise mutual information, while the purely misinformative exclusion yields
negative pointwise mutual information, the question naturally arises---in the general case, can one
decompose the pointwise information into underlying informative and misinformative components each
associated with one type of exclusion?

Before attempting to address this question, there are two other important observations to be made
about probability mass exclusions.  The first observation is that an event can only ever induce an
informative exclusion about itself---if $x$ occurred then clearly that precludes the complementary
event $\ob{x}$ from having occurred. The second observation is that the exclusion process must
satisfy the chain rule of probability; in particular, as shown in \fig{pmass_chain}, there are three
equivalent ways to consider the exclusions induced in $P(X)$ by the events $y$ and $z$.  Firstly,
one could consider the information provided by the joint event $yz$ which excludes the probability
mass in $P(X)$ associated with the joint events $y\ob{z}$, $\ob{y}z$ and $\ob{y}\ob{z}$.  Secondly,
one could first consider the information provided by $y$ which excludes the probability mass in
$P(X)$ associated with the joint events $\ob{y}z$ and $\ob{y}\ob{z}$, and then subsequently consider
the information provided by $z$ which excludes the probability mass in $P(X|y)$ associated with the
joint event $y\ob{z}$.  Thirdly, one could first consider the information provided by $z$ which
excludes the probability mass in $P(X)$ associated with the joint events $y\ob{z}$ and
$\ob{y}\ob{z}$, and then subsequently consider the information provided by $y$ which excludes the
probability mass in $P(X|z)$ associated with the joint event $\ob{y} z$.  Regardless of the
chaining, one starts with the same $p(x)$ and $p(\ob{x})$ and finishes with the same $p(x|yz)$ and
$p(\ob{x}|yz)$.

\begin{figure*}[t]
  \centering
  \input{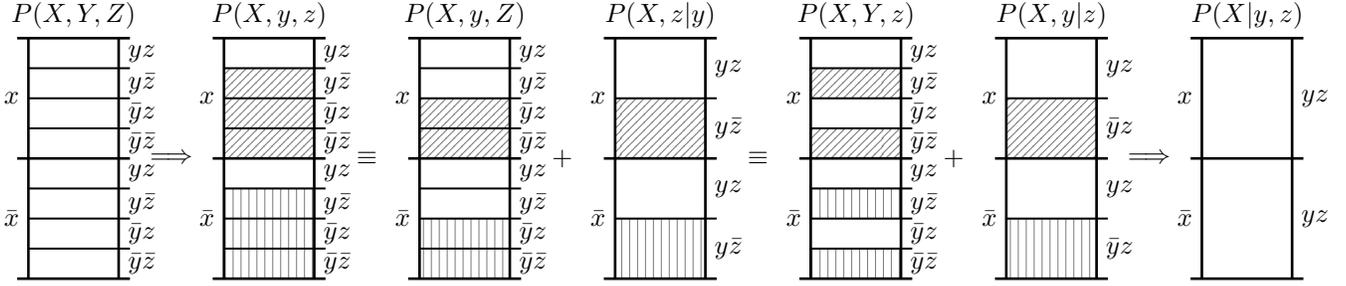}
  \caption{The probability mass exclusions must satisfy the chain rule of probability:\ there three
    equivalent ways $y$ and $z$ can provide information about $x$. 
 }
  \label{fig:pmass_chain}
\end{figure*}

Returning now to the question of decomposing the pointwise information---consider the following
postulates. Postulate~\ref{post:decomp} is a formal statement of the proposed decomposition, while
Postulate~\ref{post:monotonicity} mandates that the information associated with the exclusions
satisfies the functional relationship observed between the pointwise mutual information in both the
purely informative and purely misinformative cases.  Postulate~\ref{post:self_info} is based upon
the observation that an event can not misinform about itself, and finally,
Postulate~\ref{post:chain_rule} demands that the information associated with these exclusions must
satisfy the chain rule of probability.

\newtheorem{postulate}{Postulate}
\newcommand{\iyxp}{{i_+(y \ra x)}}
\newcommand{\iyxn}{{i_-(y \ra x)}}
\newcommand{\Iyxp}{{I_+(Y \ra X)}}
\newcommand{\Iyxn}{{I_-(Y \ra X)}}


\begin{postulate}[Decomposition]
  \label{post:decomp}
  The pointwise information provided by $y$ about $x$ can be decomposed into two non-negative
  components, such that $i(x;y) = \iyxp - \iyxn$.
\end{postulate}

\begin{postulate}[Monotonicity]
  \label{post:monotonicity}
  For all fixed $p(x,y)$ and $p(x,\ob{y})$, the function $\iyxp$ is a monotonically increasing,
  continuous function of $p(\ob{x},\ob{y})$.  For all fixed $p(\ob{x},y)$ and $p(\ob{x},\ob{y})$,
  the function $\iyxn$ is a monotonically increasing continuous function of $p(x,\ob{y})$.  For all
  fixed $p(x,y)$ and $p(\ob{x},y)$, the functions $\iyxp$ and $\iyxn$ are monotonically increasing
  and decreasing functions of $p(\ob{x},\ob{y})$, respectively.
\end{postulate}

\begin{postulate}[Self-Information]
  \label{post:self_info}
  An event cannot misinform about itself, hence $i_+(x \ra x) = i(x;x) = h(x) = - \log_b p(x)$.
\end{postulate}

\begin{postulate}[Chain Rule]
  \label{post:chain_rule}
  The functions \mbox{$\iyxp$} and \mbox{$\iyxn$} satisfy a chain rule, i.e.
  \begin{align*}
    i_+(yz \ra x) &= i_+(y \ra x) + i_+(z \ra x | y) \\
                  &= i_+(z \ra x) + i_+(y \ra x | z), \\
    i_-(yz \ra x) &= i_-(y \ra x) + i_-(z \ra x | y) \\
                  &= i_-(z \ra x) + i_-(y \ra x | z)
  \end{align*}
\end{postulate}

 
\newtheorem{theorem}{Theorem}
\newtheorem{lemma}{Lemma}
\newtheorem{corollary}{Corollary}

 
\begin{theorem}
  \label{thm:decomp}
  The unique functions satisfying the postulates are
  \begin{alignat}{4}
    \label{eq:spec}
    & \iyxp &&= h(y)   &&= - \log_b p(y),   \\
    & \iyxn &&= h(y|x) &&= - \log_b p(y|x),
    \label{eq:ambig}
  \end{alignat}
  where the base $b$ is fixed by the choice of base in Postulate~\ref{post:self_info}.
\end{theorem}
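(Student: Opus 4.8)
The plan is to read the chain rule as a Cauchy-type functional equation, exactly as Fano does for the self-information, and to let the remaining postulates single out its unique regular solution. The first move is a reduction: Postulate~\ref{post:decomp} gives $\iyxn = \iyxp - i(x;y)$, so it is enough to pin down the informative component $\iyxp$. Once $\iyxp = -\log_b p(y)$ has been established, the misinformative component drops out by a single substitution, since $\iyxp - i(x;y) = -\log_b p(y) - \log_b\!\left(p(x,y)/(p(x)p(y))\right) = -\log_b\!\left(p(x,y)/p(x)\right) = -\log_b p(y|x) = h(y|x)$. (The reverse direction --- that the pair $\iyxp=h(y)$, $\iyxn=h(y|x)$ does satisfy all four postulates --- is a routine verification that I would dispatch first.)

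To determine $\iyxp$, I would first argue that it depends on the joint distribution only through the single scalar $p(y)$, writing $\iyxp = \phi(p(y))$ for some function $\phi$. This is where Postulate~\ref{post:monotonicity} does its work: the informative component is tied to the informative exclusion $p(\ob{x},\ob{y})$ and, once $p(y)$ is fixed, is insensitive to how the excluded and retained mass is split between $x$ and $\ob{x}$, while the stipulated continuity rules out jumps. With this reduction in hand, I would specialise the chain rule of Postulate~\ref{post:chain_rule} to a configuration in which $y$ and $z$ are independent and each induces a purely informative exclusion about $x$ (constructible by choosing the joint $P(X,Y,Z)$ with enough states). Then $p(z|y)=p(z)$ and $p(yz)=p(y)p(z)$, and the chain rule $i_+(yz \ra x) = \iyxp + i_+(z \ra x \mid y)$ collapses to the multiplicative Cauchy equation $\phi(p(y)p(z)) = \phi(p(y)) + \phi(p(z))$.

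The endgame is standard. The continuity and monotonicity supplied by Postulate~\ref{post:monotonicity} force the only admissible solutions of this Cauchy equation to be logarithmic, $\phi(t) = -\log_\beta t$ for some base $\beta>1$, the sign being fixed by the requirement that $\iyxp$ increase as the informative exclusion $p(\ob{x},\ob{y})=1-p(y)$ grows. Postulate~\ref{post:self_info}, evaluated at $y=x$ where $p(y)=p(x)$, reads $\phi(p(x)) = -\log_b p(x)$ and therefore fixes $\beta=b$; the Cauchy solution carries no additive constant since $\phi(1)=0$. Hence $\iyxp = -\log_b p(y) = h(y)$, and the substitution above yields $\iyxn = h(y|x)$. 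A final consistency pass checks that these forms respect the chain rule and monotonicity for $\iyxn$ as well, and that the value obtained on the purely informative submanifold agrees with the direct evaluation $i(x;y) = -\log_b\!\left(1-p(\ob{x},\ob{y})\right)$ of the purely informative case.

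The main obstacle is the single-variable reduction $\iyxp = \phi(p(y))$. Monotonicity by itself constrains $\iyxp$ only along the coordinate directions, so disentangling which combination of the four joint masses $\iyxp$ genuinely depends on --- and confirming that it is exactly $p(y)$ rather than some other invariant --- is the delicate step that makes the one-dimensional Cauchy argument legitimate. A secondary technical point is ensuring that the independent, purely informative configurations realise enough values of the pair $(p(y),p(z))$ (an interval suffices) for continuity to promote the Cauchy relation to the logarithm over the whole range, and to extend the result from the purely informative submanifold to the general exclusion via the chain-rule decomposition depicted in the mass diagrams.
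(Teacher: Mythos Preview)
Your reduction $\iyxp=\phi\bigl(p(y)\bigr)$ is the crux, and Postulate~\ref{post:monotonicity} does not supply it. The monotonicity clauses assert only that $\iyxp$ is (weakly) increasing along certain coordinate directions in the simplex of joint masses; they do not say that $\iyxp$ is constant when $p(y)$ is held fixed while mass is shuffled between $p(x,y)$ and $p(\bar{x},y)$, or between $p(x,\bar{y})$ and $p(\bar{x},\bar{y})$. Indeed, the third clause --- $\iyxp$ increasing in $p(\bar{x},\bar{y})$ with $p(x,y)$ and $p(\bar{x},y)$ fixed --- runs along a direction in which $p(y)$ is already constant, so it is perfectly consistent with $\iyxp$ depending on more than $p(y)$ (the eventual answer makes that monotonicity degenerate, but the postulate does not force degeneracy). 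Without the reduction, the Cauchy functional equation is never reached, and the logarithm is not forced.

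The paper sidesteps this entirely. In the purely informative case it introduces an auxiliary event $z$ with $x=yz$, so that Postulate~\ref{post:self_info} applies twice --- once as $i_+(yz\to x)=h(x)$ and once as $i_+(z\to x\,|\,y)=h(x|y)$ --- and the chain rule immediately gives $\iyxp=h(x)-h(x|y)=h(y)$ (Lemma~\ref{lem:pure_inf}), with no single-variable ansatz needed. A companion construction with $z=x$ handles the purely misinformative case (Lemma~\ref{lem:pure_mis}), invoking a Cauchy-type argument only in the degenerate regime $p(\bar{x})=0$ (Lemma~\ref{lem:no_comp}). The general case is then assembled by factoring $y=uv$ with $u$ purely informative and $v\,|\,u$ purely misinformative. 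Your closing remark about extending ``via the chain-rule decomposition depicted in the mass diagrams'' is in fact the right idea and is exactly what carries the argument; but it is Postulate~\ref{post:self_info}, threaded through carefully chosen auxiliary events, that does the work you hoped Postulate~\ref{post:monotonicity} would do.
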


By writing these function in terms of the exclusions, it is trivial to see that \eq{spec} and
\eq{ambig} satisfy \mbox{Postulates~\ref{post:decomp}--\ref{post:chain_rule}}, i.e.
\begin{align}
  \label{eq:spec_exc}  
  \iyxp &= -\log \big(1 - p(x,\ob{y}) - p(\ob{x},\ob{y})\big), \\[5pt]
  \iyxn &= -\log \bigg(1 - \frac{p(x,\ob{y})}{p(x)} \bigg).
  \label{eq:ambig_exc}    
\end{align}
As such, the proof focuses on the uniqueness of the functions and is structured as follows:\
Lemma~\ref{lem:no_comp} considers the functional form required when $p(\ob{x})=0$, and is used in
the proof of Lemma~\ref{lem:pure_mis}; Lemmas~\ref{lem:pure_inf} and \ref{lem:pure_mis} consider the
purely informative and misinformative special cases respectively; finally, the proof of
Theorem~\ref{thm:decomp} brings these two special cases together for the general case.
 
\begin{lemma}
  \label{lem:no_comp}
  In the special case where $p(\ob{x}) = 0$, we have that $\iyxp = \iyxn = - \log_k p(y)$ where
  $k \geq b$.
\end{lemma}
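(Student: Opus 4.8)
The plan is to exploit the fact that $p(\ob{x}) = 0$ collapses the problem to one dimension. When $p(\ob{x}) = 0$ the four exclusion probabilities reduce to $p(x,y) = p(y)$, $p(x,\ob{y}) = p(\ob{y})$, and $p(\ob{x},y) = p(\ob{x},\ob{y}) = 0$, so both $\iyxp$ and $\iyxn$ are functions of $p(y)$ alone; write this common dependence as $g$. Moreover, since $p(x) = 1$ gives $i(x;y) = \log_b\!\big(p(y)/(p(x)p(y))\big) = 0$, Postulate~\ref{post:decomp} forces $\iyxp = \iyxn$, so that $\iyxp = \iyxn = g\big(p(y)\big)$ with a single unknown $g$.

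First I would pin down the shape of $g$ using the chain rule. Conditioning on $y$ leaves $x$ certain (as $p(x)=1$ implies $p(x|y)=1$), so the conditional component $i_+(z \ra x \mid y)$ is governed by the same function evaluated at $p(z|y)$. Postulate~\ref{post:chain_rule} then reads $g\big(p(yz)\big) = g\big(p(y)\big) + g\big(p(z|y)\big)$, and since $p(yz)=p(y)\,p(z|y)$ with $p(y)$ and $p(z|y)$ freely and independently realisable by a suitable joint distribution, this is the multiplicative Cauchy equation $g(uv) = g(u) + g(v)$ on $(0,1]$. The continuity asserted in Postulate~\ref{post:monotonicity} selects the logarithmic solutions $g(t) = c\ln t$, and the requirement that $\iyxn$ increase with the misinformative exclusion $p(x,\ob{y}) = 1-p(y)$ forces $c<0$. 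Hence $g(t) = -\log_k t$ for some $k>1$, giving $\iyxp = \iyxn = -\log_k p(y)$.

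The main obstacle is establishing the bound $k \ge b$, since nothing internal to the $p(\ob{x})=0$ slice fixes the base; I would obtain it by a comparison argument that leaves that slice. Holding $p(x,y)=p(y)$ and $p(\ob{x},y)=0$ fixed, consider the one-parameter family of joint distributions in which $p(\ob{x},\ob{y})$ grows from $0$ to $p(\ob{y})$ while $p(x,\ob{y})$ shrinks correspondingly from $p(\ob{y})$ to $0$. At the lower endpoint this is exactly the present deterministic case, with $\iyxp = -\log_k p(y)$; at the upper endpoint the exclusion is purely informative and $x$ and $y$ are perfectly correlated, so the four exclusion probabilities coincide with those of the self-information scenario and Postulate~\ref{post:self_info} gives $\iyxp = -\log_b p(y)$. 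The third clause of Postulate~\ref{post:monotonicity} makes $\iyxp$ increase along this family, so $-\log_k p(y) \le -\log_b p(y)$ for every $p(y)\in(0,1)$, which is equivalent to $\ln k \ge \ln b$, i.e.\ $k \ge b$. The delicate points to verify are that every member of the family is a valid distribution and that $\iyxp$ and $\iyxn$ depend only on the four exclusion probabilities, so that the upper endpoint may legitimately be evaluated using the self-information postulate.
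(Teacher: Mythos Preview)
Your proposal is correct and reaches the same conclusion as the paper, but the route to the logarithmic form is different. You invoke the multiplicative Cauchy equation $g(uv)=g(u)+g(v)$ on $(0,1]$ and then appeal to the \emph{continuity} clause of Postulate~\ref{post:monotonicity} to force $g(t)=c\ln t$. The paper instead runs the classical Fano/Shannon squeezing argument: it sets $m=1/p(y)$, bounds $2^r$ between consecutive powers $m^n$ and $m^{n+1}$, uses the chain rule to get $f(m^n)=nf(m)$, and then uses the \emph{monotonicity} clause (not continuity) to squeeze $f(m)/f(2)$ against $\log m/\log 2$, first for integers, then rationals, then reals. Your argument is shorter and arguably cleaner; the paper's argument has the virtue of using only monotonicity, so it would survive if the continuity hypothesis were dropped.

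For the bound $k\ge b$, your one-parameter family is exactly the paper's comparison in slightly more elaborate dress. The paper simply takes the single comparison point $z=y$: then $p(z,y)=p(y)=p(x,y)$, $p(\bar z,y)=0=p(\bar x,y)$, and $p(\bar z,\bar y)=p(\bar y)\ge 0=p(\bar x,\bar y)$, so the third clause of Postulate~\ref{post:monotonicity} gives $i_+(y\to z)\ge i_+(y\to x)$ directly, with the left side equal to $-\log_b p(y)$ by Postulate~\ref{post:self_info}. This is precisely your upper endpoint, and the ``delicate point'' you flag---that $i_+$ depends only on the four exclusion masses---is implicit in the way Postulate~\ref{post:monotonicity} is stated and is used identically in both arguments.
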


\begin{proof}
  Since $p(\ob{x}) = 0$, we have that $i(x;y)=0$ and hence by Postulate~\ref{post:decomp}, that
  $\iyxp = \iyxn$.  Furthermore, we also have that $p(y) = 1 - p(x,\ob{y})$; thus, without a loss of
  generality, we will consider $\iyxn$ to be a function of $p(y)$ rather than $p(x,\ob{y})$.  As
  such, let $f(m)$ be our candidate function for $\iyxn$ where $m=\nf{1}{p(y)}$.  First consider
  choosing ${p(x,\ob{y})=0}$, such that ${m=1}$.  Postulate~\ref{post:chain_rule} demands that
  $f(1) = f(1 \cdot 1) = f(1) + f(1)$ and hence ${f(1)=0}$, i.e.\ if there is no misinformative
  exclusion, then the negative informational component should be zero.
  
  Now consider choosing $p(x,\ob{y})$ so that $m$ is a positive integer greater than $1$.  If $r$ is
  an arbitrary positive integer, then $2^r$ lies somewhere between two powers of $m$, i.e.\ there
  exists a positive integer $n$ such that
  \begin{equation}
    \label{eq:int_bound}
    m^n \leq 2^r < m^{n+1}.
  \end{equation}
  So long as the base $k$ is greater than 1, the logarithm is a monotonically increasing function,
  thus
  \begin{equation}
    \log_k m^n \leq \log_k 2^r < \log_k m^{n+1},
  \end{equation}
  or equivalently, 
  \begin{equation}
    \label{eq:log_bound}
    \frac{n}{r} \leq \frac{\log_k 2}{\log_k m} < \frac{n+1}{r}.
  \end{equation}
  By Postulate \ref{post:monotonicity}, $f(m)$ is a monotonically increasing function of $m$, hence
  applying it to \eq{int_bound} yields
  \begin{equation}
    \label{eq:f_bound}
    f(m^n) \leq f(2^r) < f(m^{n+1}). 
  \end{equation}
  Note that, by Postulate~\ref{post:chain_rule} and mathematical induction, it is trivial to
  verify that
  \begin{equation}
    \label{eq:induc_hyp}
    f(m^{n}) = n \cdot f(m).
  \end{equation}
  Hence, by \eq{f_bound} and \eq{induc_hyp}, we have that
  \begin{equation}
    \label{eq:function_bound}
    \frac{n}{r} \leq \frac{f(2)}{f(m)} < \frac{n+1}{r}.
  \end{equation}
  Now, \eq{log_bound} and \eq{function_bound} have the same bounds, hence
  \begin{equation}
    \label{eq:squeeze}
    \left| \frac{\log_k 2}{\log_k m} - \frac{f(2)}{f(m)} \right| \leq \frac{1}{r}.
  \end{equation}
  Since $m$ is fixed and $r$ is arbitrary, let \mbox{$r \ra \infty$}.  Then, by the squeeze theorem,
  we get that
  \begin{align}
    \frac{\log_k 2}{\log_k m} &= \frac{f(2)}{f(m)}, \\
  \shortintertext{and hence,}
    \label{eq:resultInteger}
    f(m) &= \log_k m.
  \end{align}

  Now consider choosing $p(x,\ob{y})$ so that $m$ is a rational number; in particular, let
  $m=\nf{s}{r}$ where $s$ and $r$ are positive integers.  By Postulate~\ref{post:chain_rule},
  \begin{equation}
    \label{eq:in_rational}
    f(s) = f(\nf{s}{r} \cdot r) = f(\nf{s}{r}) + f(r) = f(m) + f(r).
  \end{equation}
  Thus, combining \eq{resultInteger} and \eq{in_rational}, we get that
  \begin{equation}
    \label{eq:rational}
    f(m) \!=\! f(s) - f(r) = \log_k(s) - \log_k (r) = \log_k m.
  \end{equation}
  Now consider choosing $p(x,\ob{y})$ such that $m$ is a real number.  By
  Postulate~\ref{post:monotonicity}, the function \eq{rational} is the unique solution, and hence,
  $\iyxp = \iyxn = - \log_k p(y)$.

  Finally, to show that $k \geq b$, consider an event $z=y$. By Postulate~\ref{post:self_info},
  ${i_+(y \ra z)} = - \log_b p(y)$. Furthermore, since $p(\ob{z},\ob{y}) \geq p(\ob{x},\ob{y}) = 0$,
  by Postulate~\ref{post:monotonicity}, ${i_+(y \ra z)} \geq \iyxp$.  Thus,
  $-\log_b p(y) \geq -\log_k p(y)$, and hence $k \geq b$. \qedhere
  
\end{proof}


\begin{figure}[t]
  \centering
  \begin{tikzpicture}

  \def\originy{0} 
  \def\height{2.8}
  \def\width{1.2}
  \def\overdrawn{0.2}
  \def\xgap{3}


  \def\originxthree{0}

  \draw[pattern=vertical lines, pattern color=gray] (\originxthree,\originy) rectangle
    (\originxthree+\width,\originy+3/4*\height); 
  
  \draw[black, line width=1pt] (\originxthree,\originy)
    -- (\originxthree,\originy+\height);
  \draw[black, line width=1pt] (\originxthree-\overdrawn,\originy+\height)
    -- (\originxthree+\width+\overdrawn,\originy+\height);
  \draw[black, line width=1pt] (\originxthree+\width,\originy+\height)
    -- (\originxthree+\width,\originy);
  \draw[black, line width=1pt] (\originxthree+\width+\overdrawn,\originy)
    -- (\originxthree-\overdrawn,\originy);

  \draw[black, line width=0.75pt] (\originxthree,\originy+1/2*\height)
    -- (\originxthree+\width+\overdrawn,\originy+1/2*\height);
  \draw[black, line width=1pt] (\originxthree-\overdrawn,\originy+3/4*\height)
    -- (\originxthree+\width+\overdrawn,\originy+3/4*\height);
  \draw[black, line width=0.75pt] (\originxthree,\originy+1/4*\height)
    -- (\originxthree+\width+\overdrawn,\originy+1/4*\height);
   
  \node at (\originxthree,7/8*\height)[anchor=east] {$x$};
  \node at (\originxthree,3/8*\height)[anchor=east] {$\ob{x}$};
  \node at (\originxthree+\width,7/8*\height)[anchor=west] {$yz$};
  \node at (\originxthree+\width,5/8*\height)[anchor=west] {$y\ob{z}$};
  \node at (\originxthree+\width,3/8*\height)[anchor=west] {$\ob{y}z$};
  \node at (\originxthree+\width,1/8*\height)[anchor=west] {$\ob{y}\ob{z}$};
  \node at (\originxthree+\width/2,\originy+\height)[anchor=south] {$P(X,y,z)$};


  \def\originx{\xgap}
  
  \draw[pattern=vertical lines, pattern color=gray] (\originx,\originy) rectangle
    (\originx+\width,\originy+1/2*\height); 
  
  \draw[black, line width=1pt] (\originx,\originy)
    -- (\originx,\originy+\height);
  \draw[black, line width=1pt] (\originx-\overdrawn,\originy+\height)
    -- (\originx+\width+\overdrawn,\originy+\height);
  \draw[black, line width=1pt] (\originx+\width,\originy+\height)
    -- (\originx+\width,\originy);
  \draw[black, line width=1pt] (\originx+\width+\overdrawn,\originy)
    -- (\originx-\overdrawn,\originy);

  \draw[black, line width=0.75pt] (\originx,\originy+1/2*\height)
    -- (\originx+\width+\overdrawn,\originy+1/2*\height);
  \draw[black, line width=1pt] (\originx-\overdrawn,\originy+3/4*\height)
    -- (\originx+\width+\overdrawn,\originy+3/4*\height);
  \draw[black, line width=0.75pt] (\originx,\originy+1/4*\height)
    -- (\originx+\width+\overdrawn,\originy+1/4*\height);
   
  \node at (\originx,7/8*\height)[anchor=east] {$x$};
  \node at (\originx,3/8*\height)[anchor=east] {$\ob{x}$};
  \node at (\originx+\width,7/8*\height)[anchor=west] {$yz$};
  \node at (\originx+\width,5/8*\height)[anchor=west] {$y\ob{z}$};
  \node at (\originx+\width,3/8*\height)[anchor=west] {$\ob{y}z$};
  \node at (\originx+\width,1/8*\height)[anchor=west] {$\ob{y}\ob{z}$};
  \node at (\originx+\width/2,\originy+\height)[anchor=south] {$P(X,y,Z)$};

  
  \def\originxtwo{2*\xgap}


  \draw[pattern=vertical lines, pattern color=gray] (\originxtwo,\originy) rectangle
    (\originxtwo+\width,\originy+1/2*\height); 
  
  \draw[black, line width=1pt] (\originxtwo,\originy)
    -- (\originxtwo,\originy+\height);
  \draw[black, line width=1pt] (\originxtwo-\overdrawn,\originy+\height)
    -- (\originxtwo+\width+\overdrawn,\originy+\height);
  \draw[black, line width=1pt] (\originxtwo+\width,\originy+\height)
    -- (\originxtwo+\width,\originy);
  \draw[black, line width=1pt] (\originxtwo+\width+\overdrawn,\originy)
    -- (\originxtwo-\overdrawn,\originy);

  \draw[black, line width=1pt] (\originxtwo-\overdrawn,\originy+\height/2)
    -- (\originxtwo+\width+\overdrawn,\originy+\height/2);

  \node at (\originxtwo,3/4*\height)[anchor=east] {$x$};
  \node at (\originxtwo,1/4*\height)[anchor=east] {$\ob{x}$};
  \node at (\originxtwo+\width,3/4*\height)[anchor=west] {$yz$};
  \node at (\originxtwo+\width,1/4*\height)[anchor=west] {$y\ob{z}$};
  \node at (\originxtwo+\width/2,\originy+\height)[anchor=south] {$P(X,z|y)$};

  
  \node at (\originx-\xgap/2+\width/2 ,1/2*\height)[anchor=center] {$\equiv$};
  \node at (\originxtwo-\xgap/2+\width/2, 1/2*\height)[anchor=center] {$+$};

\end{tikzpicture}
  \caption{The probability mass diagram associated with \eq{pmass_lem_inf}. Lemma~\ref{lem:pure_inf}
    uses Postulates~\ref{post:self_info} and \ref{post:chain_rule} to provide a solution for the
    purely informative case. }
  \label{fig:pmass_lem_inf}
\end{figure}
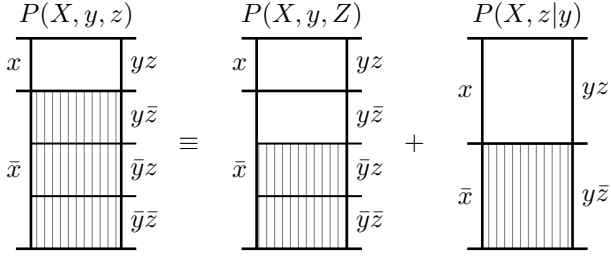

\begin{lemma}
  \label{lem:pure_inf}
  In the purely informative case where $p(x, \ob{y}) = 0$, we have that $\iyxp = - \log_b p(y)$ and
  $\iyxn = 0$.
\end{lemma}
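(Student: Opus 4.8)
The plan is to realise the purely informative configuration as the first link of a chain that terminates in a self-information, and then to read off $\iyxp$ and $\iyxn$ by inverting the chain rule. Since $p(x,\ob{y})=0$ forces $p(x,y)=p(x)$ and hence that $x$ implies $y$, I would introduce an auxiliary event $z$ (concretely, one may simply take $z=x$, which is legitimate precisely because $x$ implies $y$) chosen so that the joint event $yz$ coincides with $x$, i.e.\ $p(x,yz)=p(x)$ and $p(\ob{x},yz)=0$, and so that, conditioned on $y$, the event $z$ again coincides with $x$. Figure~\ref{fig:pmass_lem_inf} depicts exactly this situation: the exclusion induced by $yz$ leaves only the mass on $x$, and it factors through the intermediate exclusion induced by $y$.

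First I would evaluate the two endpoints of the chain using Postulate~\ref{post:self_info}. Because the event $yz$ coincides, as a set of outcomes, with $x$, conditioning on $yz$ is the same as conditioning on $x$, so $i_+(yz \ra x) = i(x;x) = h(x) = -\log_b p(x)$ and, since an event cannot misinform about itself, $i_-(yz \ra x)=0$. Applying the same postulate inside the conditional distribution $P(X|y)$—in which $z$ coincides with $x$—gives the conditional self-information $i_+(z \ra x \mid y) = h(x|y) = -\log_b p(x|y)$ together with $i_-(z \ra x \mid y)=0$.

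Next I would invoke the chain rule of Postulate~\ref{post:chain_rule} in the ordering that peels off $y$ first,
\begin{align*}
  i_+(yz \ra x) &= \iyxp + i_+(z \ra x \mid y), \\
  i_-(yz \ra x) &= \iyxn + i_-(z \ra x \mid y),
\end{align*}
and substitute the endpoint values. The misinformative line collapses to $0 = \iyxn + 0$, so $\iyxn = 0$. The informative line gives $-\log_b p(x) = \iyxp - \log_b p(x|y)$, whence $\iyxp = h(x) - h(x|y) = \log_b\big(p(x|y)/p(x)\big) = i(x;y)$. Finally, using $p(x,y)=p(x)$ in the purely informative case, $i(x;y)=\log_b\big(p(x)/(p(x)p(y))\big) = -\log_b p(y)$, which establishes $\iyxp = -\log_b p(y)$; note that $\iyxn=0$ is equally a direct consequence of Postulate~\ref{post:decomp} once $\iyxp$ and $i(x;y)$ are seen to agree.

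I expect the main obstacle to be justifying the two self-information evaluations rather than the algebra. The step $i_\pm(yz \ra x) = i_\pm(x \ra x)$ relies on the informational functions depending only on the probabilistic relationship between the conditioning event and $x$, so that an event which pins $x$ down to certainty behaves exactly like $x$ itself; and the conditional evaluation requires reading Postulate~\ref{post:self_info} as applying within the conditional distribution $P(X|y)$, which is consistent with the conditional terms already appearing in Postulate~\ref{post:chain_rule}. Care is also needed to exhibit a legitimate auxiliary $z$ realising the stated coincidences—the choice $z=x$ suffices—and to confirm that the chain rule may be applied in the $y$-first ordering used above.
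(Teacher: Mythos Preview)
Your proposal is correct and follows essentially the same route as the paper: introduce an auxiliary $z$ so that $yz$ coincides with $x$ (the paper phrases this as $x=yz$, while you take $z=x$, which is equivalent here since $p(x,\ob{y})=0$), evaluate the endpoints via Postulate~\ref{post:self_info}, apply the chain rule, and simplify using $p(y|x)=1$. The only cosmetic difference is that the paper obtains $\iyxn=0$ at the end from Postulate~\ref{post:decomp}, whereas you first read it off the $i_-$ chain and then note the Postulate~\ref{post:decomp} route as an alternative.
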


\begin{figure*}[b]
    \centering
    \begin{tikzpicture}

  \def\originy{0} 
  \def\height{2.8}
  \def\width{1.2}
  \def\overdrawn{0.2}
  \def\xgap{3}
  

  \def\originxthree{0}

  \draw[pattern=north east lines, pattern color=gray] (\originxthree,\originy+1/2*\height) rectangle
    (\originxthree+\width,\originy+3/4*\height); 
  \draw[pattern=vertical lines, pattern color=gray] (\originxthree,\originy) rectangle
    (\originxthree+\width,\originy+1/2*\height); 
  
  \draw[black, line width=1pt] (\originxthree,\originy)
    -- (\originxthree,\originy+\height);
  \draw[black, line width=1pt] (\originxthree-\overdrawn,\originy+\height)
    -- (\originxthree+\width+\overdrawn,\originy+\height);
  \draw[black, line width=1pt] (\originxthree+\width,\originy+\height)
    -- (\originxthree+\width,\originy);
  \draw[black, line width=1pt] (\originxthree+\width+\overdrawn,\originy)
    -- (\originxthree-\overdrawn,\originy);

  \draw[black, line width=0.75pt] (\originxthree,\originy+3/4*\height)
    -- (\originxthree+\width+\overdrawn,\originy+3/4*\height);
  \draw[black, line width=1pt] (\originxthree-\overdrawn,\originy+1/2*\height)
    -- (\originxthree+\width+\overdrawn,\originy+1/2*\height);
   
  \node at (\originxthree,6/8*\height)[anchor=east] {$x$};
  \node at (\originxthree,2/8*\height)[anchor=east] {$\ob{x}$};
  \node at (\originxthree+\width,7/8*\height)[anchor=west] {$yz$};
  \node at (\originxthree+\width,5/8*\height)[anchor=west] {$\ob{y}z$};
  \node at (\originxthree+\width,2/8*\height)[anchor=west] {$y\ob{z}$};
  \node at (\originxthree+\width/2,\originy+\height)[anchor=south] {$P(X,y,z)$};


  \def\originx{\xgap}
  
  \draw[pattern=north east lines, pattern color=gray] (\originx,\originy+1/2*\height) rectangle
    (\originx+\width,\originy+3/4*\height); 
  
  \draw[black, line width=1pt] (\originx,\originy)
    -- (\originx,\originy+\height);
  \draw[black, line width=1pt] (\originx-\overdrawn,\originy+\height)
    -- (\originx+\width+\overdrawn,\originy+\height);
  \draw[black, line width=1pt] (\originx+\width,\originy+\height)
    -- (\originx+\width,\originy);
  \draw[black, line width=1pt] (\originx+\width+\overdrawn,\originy)
    -- (\originx-\overdrawn,\originy);

  \draw[black, line width=0.75pt] (\originx,\originy+3/4*\height)
    -- (\originx+\width+\overdrawn,\originy+3/4*\height);
  \draw[black, line width=1pt] (\originx-\overdrawn,\originy+1/2*\height)
    -- (\originx+\width+\overdrawn,\originy+1/2*\height);
   
  \node at (\originx,6/8*\height)[anchor=east] {$x$};
  \node at (\originx,2/8*\height)[anchor=east] {$\ob{x}$};
  \node at (\originx+\width,7/8*\height)[anchor=west] {$yz$};
  \node at (\originx+\width,5/8*\height)[anchor=west] {$\ob{y}z$};
  \node at (\originx+\width,2/8*\height)[anchor=west] {$y\ob{z}$};
  \node at (\originx+\width/2,\originy+\height)[anchor=south] {$P(X,y,Z)$};

  
  \def\originxtwo{2*\xgap}


  \draw[pattern=vertical lines, pattern color=gray] (\originxtwo,\originy) rectangle
    (\originxtwo+\width,\originy+2/3*\height); 
  
  \draw[black, line width=1pt] (\originxtwo,\originy)
    -- (\originxtwo,\originy+\height);
  \draw[black, line width=1pt] (\originxtwo-\overdrawn,\originy+\height)
    -- (\originxtwo+\width+\overdrawn,\originy+\height);
  \draw[black, line width=1pt] (\originxtwo+\width,\originy+\height)
    -- (\originxtwo+\width,\originy);
  \draw[black, line width=1pt] (\originxtwo+\width+\overdrawn,\originy)
    -- (\originxtwo-\overdrawn,\originy);

  \draw[black, line width=1pt] (\originxtwo-\overdrawn,\originy+2/3*\height)
    -- (\originxtwo+\width+\overdrawn,\originy+2/3*\height);

  \node at (\originxtwo,5/6*\height)[anchor=east] {$x$};
  \node at (\originxtwo,2/6*\height)[anchor=east] {$\ob{x}$};
  \node at (\originxtwo+\width,5/6*\height)[anchor=west] {$yz$};
  \node at (\originxtwo+\width,2/6*\height)[anchor=west] {$y\ob{z}$};
  \node at (\originxtwo+\width/2,\originy+\height)[anchor=south] {$P(X,z|y)$};


  \def\originxfour{3*\xgap}
  
  \draw[pattern=vertical lines, pattern color=gray] (\originxfour,\originy) rectangle
    (\originxfour+\width,\originy+1/2*\height); 
  
  \draw[black, line width=1pt] (\originxfour,\originy)
    -- (\originxfour,\originy+\height);
  \draw[black, line width=1pt] (\originxfour-\overdrawn,\originy+\height)
    -- (\originxfour+\width+\overdrawn,\originy+\height);
  \draw[black, line width=1pt] (\originxfour+\width,\originy+\height)
    -- (\originxfour+\width,\originy);
  \draw[black, line width=1pt] (\originxfour+\width+\overdrawn,\originy)
    -- (\originxfour-\overdrawn,\originy);

  \draw[black, line width=0.75pt] (\originxfour,\originy+3/4*\height)
    -- (\originxfour+\width+\overdrawn,\originy+3/4*\height);
  \draw[black, line width=1pt] (\originxfour-\overdrawn,\originy+1/2*\height)
    -- (\originxfour+\width+\overdrawn,\originy+1/2*\height);
   
  \node at (\originxfour,6/8*\height)[anchor=east] {$x$};
  \node at (\originxfour,2/8*\height)[anchor=east] {$\ob{x}$};
  \node at (\originxfour+\width,7/8*\height)[anchor=west] {$yz$};
  \node at (\originxfour+\width,5/8*\height)[anchor=west] {$\ob{y}z$};
  \node at (\originxfour+\width,2/8*\height)[anchor=west] {$y\ob{z}$};
  \node at (\originxfour+\width/2,\originy+\height)[anchor=south] {$P(X,Y,z)$};


  \def\originxfive{4*\xgap}
  
  \draw[pattern=north east lines, pattern color=gray] (\originxfive,\originy) rectangle
    (\originxfive+\width,\originy+1/2*\height); 
  
  \draw[black, line width=1pt] (\originxfive,\originy)
    -- (\originxfive,\originy+\height);
  \draw[black, line width=1pt] (\originxfive-\overdrawn,\originy+\height)
    -- (\originxfive+\width+\overdrawn,\originy+\height);
  \draw[black, line width=1pt] (\originxfive+\width,\originy+\height)
    -- (\originxfive+\width,\originy);
  \draw[black, line width=1pt] (\originxfive+\width+\overdrawn,\originy)
    -- (\originxfive-\overdrawn,\originy);

  \draw[black, line width=0.75pt] (\originxfive,\originy+1/2*\height)
    -- (\originxfive+\width+\overdrawn,\originy+1/2*\height);
   
  \node at (\originxfive,1/2*\height)[anchor=east] {$x$};
  \node at (\originxfive+\width,6/8*\height)[anchor=west] {$yz$};
  \node at (\originxfive+\width,2/8*\height)[anchor=west] {$\ob{y}z$};
  \node at (\originxfive+\width/2,\originy+\height)[anchor=south] {$P(X,y|z)$};

  \node at (\originx-\xgap/2+\width/2,1/2*\height)[anchor=center] {$\equiv$};
  \node at (\originxtwo-\xgap/2+\width/2,1/2*\height)[anchor=center] {$+$};
  \node at (\originxfour-\xgap/2+\width/2,1/2*\height)[anchor=center] {$\equiv$};
  \node at (\originxfive-\xgap/2+\width/2,1/2*\height)[anchor=center] {$+$};

\end{tikzpicture}
    \caption{The diagram corresponding to \eq{pmass_lem_mis_pos} and \eq{pmass_lem_mis_neg}.
      Lemma~\ref{lem:pure_mis} uses Postulate~\ref{post:chain_rule} and Lemma~\ref{lem:no_comp} to
      provide a solution for the purely misinformative case.}
    \label{fig:pmass_lem_mis}
  \end{figure*}

\begin{proof}
  Consider an event $z$ such that $x = yz$ and $\ob{x}=\{y\ob{z}, \ob{y}z, \ob{y}\ob{z}\}$.  By
  Postulate~\ref{post:chain_rule},
  \begin{equation}
    \label{eq:pmass_lem_inf}
    i_+(yz \ra x) = i_+(y \ra x) + i_+(z \ra x | y),
  \end{equation}
  as depicted in \fig{pmass_lem_inf}.  By Postulate~\ref{post:self_info}, $i_+(yz \ra x) = h(x)$ and
  $i_+(z \ra x |y) = h(x|y)$, where the latter equality follows from the equivalence of the events
  $x$ and $z$ given $y$.  Furthermore, since $p(x, \ob{y}) = 0$, we have that $p(x,y)=p(x)$, and
  hence that $p(y|x)=1$.  Thus, from \eq{pmass_lem_inf}, we have that
  \begin{align}
    i_+(y \ra x) &= h(x) - h(x|y) \nn\\
                 &= h(y) - h(y|x) \nn\\
                 &= h(y).
  \end{align}
  Finally, by Postulate~\ref{post:decomp}, $i_-(y \ra x) = 0$.
\end{proof}

\newpage

\begin{figure}[h]
  \centering
  \begin{tikzpicture}

  \def\originy{0} 
  \def\height{2.8}
  \def\width{1.2}
  \def\overdrawn{0.2}
  \def\xgap{3}
  

  \def\originxthree{0}
  
  \draw[black, line width=1pt] (\originxthree,\originy)
    -- (\originxthree,\originy+\height);
  \draw[black, line width=1pt] (\originxthree-\overdrawn,\originy+\height)
    -- (\originxthree+\width+\overdrawn,\originy+\height);
  \draw[black, line width=1pt] (\originxthree+\width,\originy+\height)
    -- (\originxthree+\width,\originy);
  \draw[black, line width=1pt] (\originxthree+\width+\overdrawn,\originy)
    -- (\originxthree-\overdrawn,\originy);

  \draw[black, line width=0.75pt] (\originxthree,\originy+3/4*\height)
    -- (\originxthree+\width+\overdrawn,\originy+3/4*\height);
  \draw[black, line width=1pt] (\originxthree-\overdrawn,\originy+\height/2)
    -- (\originxthree+\width+\overdrawn,\originy+\height/2);
  \draw[black, line width=0.75pt] (\originxthree,\originy+1/4*\height)
    -- (\originxthree+\width+\overdrawn,\originy+1/4*\height);a

  \draw[pattern=north east lines, pattern color=gray] (\originxthree,\originy+\height/2) rectangle
    (\originxthree+\width,\originy+3/4*\height);
  \draw[pattern=vertical lines, pattern color=gray] (\originxthree,\originy) rectangle
    (\originxthree+\width,\originy+\height/4);

  \node at (\originxthree,3/4*\height)[anchor=east] {$x$};
  \node at (\originxthree,1/4*\height)[anchor=east] {$\ob{x}$};
    
  \node at (\originxthree+\width,7/8*\height)[anchor=west] {$uv$};
  \node at (\originxthree+\width,5/8*\height)[anchor=west] {$u\ob{v}$};
  \node at (\originxthree+\width,3/8*\height)[anchor=west] {$uv$};
  \node at (\originxthree+\width,1/8*\height)[anchor=west] {$\ob{u}v$};
  \node at (\originxthree+\width/2,\originy+\height)[anchor=south] {$P(X,u,v)$};


  \def\originx{\xgap}
  
  \draw[black, line width=1pt] (\originx,\originy)
    -- (\originx,\originy+\height);
  \draw[black, line width=1pt] (\originx-\overdrawn,\originy+\height)
    -- (\originx+\width+\overdrawn,\originy+\height);
  \draw[black, line width=1pt] (\originx+\width,\originy+\height)
    -- (\originx+\width,\originy);
  \draw[black, line width=1pt] (\originx+\width+\overdrawn,\originy)
    -- (\originx-\overdrawn,\originy);

  \draw[black, line width=0.75pt] (\originx,\originy+3/4*\height)
    -- (\originx+\width+\overdrawn,\originy+3/4*\height);
  \draw[black, line width=1pt] (\originx-\overdrawn,\originy+\height/2)
    -- (\originx+\width+\overdrawn,\originy+\height/2);
  \draw[black, line width=0.75pt] (\originx,\originy+1/4*\height)
    -- (\originx+\width+\overdrawn,\originy+1/4*\height);
  
  \draw[pattern=vertical lines, pattern color=gray] (\originx,\originy) rectangle
    (\originx+\width,\originy+\height/4);

  \node at (\originx,3/4*\height)[anchor=east] {$x$};
  \node at (\originx,1/4*\height)[anchor=east] {$\ob{x}$};
    
  \node at (\originx+\width,7/8*\height)[anchor=west] {$uv$};
  \node at (\originx+\width,5/8*\height)[anchor=west] {$u\ob{v}$};
  \node at (\originx+\width,3/8*\height)[anchor=west] {$uv$};
  \node at (\originx+\width,1/8*\height)[anchor=west] {$\ob{u}v$};
  \node at (\originx+\width/2,\originy+\height)[anchor=south] {$P(X,u,V)$};


  \def\originxtwo{2*\xgap}
  

  %
  %
  %
  %

  \draw[black, line width=1pt] (\originxtwo,\originy)
    -- (\originxtwo,\originy+\height);
  \draw[black, line width=1pt] (\originxtwo-\overdrawn,\originy+\height)
    -- (\originxtwo+\width+\overdrawn,\originy+\height);
  \draw[black, line width=1pt] (\originxtwo+\width,\originy+\height)
    -- (\originxtwo+\width,\originy);
  \draw[black, line width=1pt] (\originxtwo+\width+\overdrawn,\originy)
    -- (\originxtwo-\overdrawn,\originy);

  \draw[black, line width=0.75pt] (\originxtwo,\originy+2/3*\height)
    -- (\originxtwo+\width+\overdrawn,\originy+2/3*\height);
  \draw[black, line width=1pt] (\originxtwo-\overdrawn,\originy+1/3*\height)
    -- (\originxtwo+\width+\overdrawn,\originy+1/3*\height);

  \draw[pattern=north east lines, pattern color=gray] (\originxtwo,\originy+1/3*\height) rectangle
    (\originxtwo+\width,\originy+2/3*\height);

  \node at (\originxtwo,4/6*\height)[anchor=east] {$x$};
  \node at (\originxtwo,1/6*\height)[anchor=east] {$\ob{x}$};
    
  \node at (\originxtwo+\width,5/6*\height)[anchor=west] {$uv$};
  \node at (\originxtwo+\width,1/2*\height)[anchor=west] {$u\ob{v}$};
  \node at (\originxtwo+\width,1/6*\height)[anchor=west] {$uv$};
  \node at (\originxtwo+\width/2,\originy+\height)[anchor=south] {$P(X,v|u)$};

  
  \node at (\originxtwo-\xgap/2+\width/2 ,1/2*\height)[anchor=center] {$+$};
  \node at (\originx-\xgap/2+\width/2, 1/2*\height)[anchor=center] {$\equiv$};

\end{tikzpicture}
  \caption{The probability mass diagram associated with \eq{pmass_thm_pos} and \eq{pmass_thm_neg}.
    Theorem~\ref{thm:decomp} uses Lemmas~\ref{lem:pure_inf} and \ref{lem:pure_mis} to provide a
    solution to the general case.}
  \label{fig:pmass_thm}
\end{figure}

\begin{lemma}
  \label{lem:pure_mis}
  In the purely misinformative case where $p(\ob{x}, \ob{y}) = 0$, we have that
  $i_+(y \ra x) = h(y) - h(y|x) - \log_k p(y|x)$ and $i_-(y \ra x) = - \log_k p(y|x)$, where
  $k \geq b$.
\end{lemma}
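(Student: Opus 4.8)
The plan is to mirror the construction of Lemma~\ref{lem:pure_inf}, but with an auxiliary event tailored to the misinformative structure and leaning on Lemma~\ref{lem:no_comp} rather than Postulate~\ref{post:self_info} for the hard term. Concretely, I would introduce an event $z$ that is logically equivalent to $x$, i.e.\ $p(x,\ob{z}) = p(\ob{x},z) = 0$ (just as Lemma~\ref{lem:pure_inf} introduced a $z$ with $x=yz$). Since $p(\ob{x},\ob{y})=0$ already rules out the $\ob{x}\,\ob{y}$ rows, the joint support over $X$, $Y$, $Z$ then consists of exactly $xyz$, $x\ob{y}z$ and $\ob{x}y\ob{z}$, reproducing \fig{pmass_lem_mis}. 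With $z$ fixed I would apply Postulate~\ref{post:chain_rule} to $i_\pm(yz\ra x)$ in the two orderings drawn there,
\begin{align*}
  i_\pm(yz \ra x) &= i_\pm(y \ra x) + i_\pm(z \ra x \mid y) \\
                  &= i_\pm(z \ra x) + i_\pm(y \ra x \mid z),
\end{align*}
and evaluate every term on the right except the target $i_\pm(y \ra x)$.

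Each of those terms collapses to a known quantity. Conditioned on $y$, the events $x$ and $z$ coincide, so by the conditional reading of Postulate~\ref{post:self_info} (the same device used in Lemma~\ref{lem:pure_inf}) we get $i_-(z \ra x \mid y)=0$; unconditionally, $z\equiv x$ gives $i_-(z \ra x)=0$. Conditioning on $z$ (equivalently on $x$) forces $p(\ob{x}\mid z)=0$, which is precisely the hypothesis of Lemma~\ref{lem:no_comp}; its conditional form, with conditioning event $y$ of probability $p(y\mid z)=p(y\mid x)$, yields $i_+(y \ra x \mid z)=i_-(y \ra x \mid z)=-\log_k p(y\mid x)$ with $k\ge b$. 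Substituting into the negative components above, the first ordering gives $i_-(yz \ra x)=\iyxn$ and the second gives $-\log_k p(y\mid x)$; equating them delivers $\iyxn = -\log_k p(y\mid x)$. The positive component then follows without further work from Postulate~\ref{post:decomp} and the identity $i(x;y)=h(y)-h(y\mid x)$, since $\iyxp = i(x;y)+\iyxn = h(y)-h(y\mid x)-\log_k p(y\mid x)$. (As a consistency check, the positive terms $i_+(z \ra x)=h(x)$ and $i_+(z \ra x\mid y)=h(x\mid y)$ reproduce the same $\iyxp$ through the positive chain, using $h(x)-h(x\mid y)=h(y)-h(y\mid x)$.)

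The single creative step is the choice of the redundant copy $z\equiv x$; once the support of \fig{pmass_lem_mis} is in place, the rest is bookkeeping with the two chain orderings. I expect the main thing needing care to be the \emph{conditional} invocations, namely the claim that Postulate~\ref{post:self_info} and Lemma~\ref{lem:no_comp} apply verbatim to the sub-distributions obtained by conditioning on $y$ and on $z$, which I would justify by checking the two structural facts against the figure: that $x$ and $z$ are the same event given $y$, and that $\ob{x}$ is null given $z$. Finally, the base $k\ge b$ is not re-derived here but inherited directly from Lemma~\ref{lem:no_comp}; it is only sharpened to $k=b$ later in the proof of Theorem~\ref{thm:decomp}.
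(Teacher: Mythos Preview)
Your proposal is correct and essentially identical to the paper's proof: both choose the auxiliary event $z\equiv x$, apply the chain rule in both orderings, evaluate the $z$-terms via Postulate~\ref{post:self_info} and the conditional $y$-term via Lemma~\ref{lem:no_comp}, and then solve for $i_\pm(y\ra x)$. The only cosmetic difference is that the paper reads off $\iyxp$ directly from the positive chain (your ``consistency check''), whereas you obtain $\iyxn$ first and recover $\iyxp$ from Postulate~\ref{post:decomp}; these are equivalent bookkeepings of the same computation.
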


\begin{proof}
  Consider an event $z=x$. By Postulate~\ref{post:chain_rule},
  \begin{align}
    i_+(yz \ra x) &= i_+(y \ra x) + i_+(z \ra x | y)  \nn\\
                  &= i_+(z \ra x) + i_+(y \ra x | z), \label{eq:pmass_lem_mis_pos}\\
    i_-(yz \ra x) &= i_-(y \ra x) + i_-(z \ra x | y)  \nn\\
                  &= i_-(z \ra x) + i_-(y \ra x | z),
    \label{eq:pmass_lem_mis_neg}
  \end{align}
  as depicted in \fig{pmass_lem_mis}. Since $z=x$, by Postulate~\ref{post:self_info},
  ${i_+(z \ra x)} = h(x)$, ${i_-(z \ra x)} = 0$, ${i_+(z \ra x | y)} = h(x|y)$ and
  ${i_-(z \ra x | y)} = 0$.  Furthermore, since $p(\ob{x}|z) = 0$, by Lemma~\ref{lem:no_comp},
  ${i_+(y \ra x | z)} = {i_-(y \ra x | z)} = -\log_k p(y|z) = -\log_k p(y|x)$, hence, from
  \eq{pmass_lem_mis_pos} and \eq{pmass_lem_mis_neg}, we get that
  \begin{align}
    i_+(y \ra x) &= h(x) - h(x|y) - \log_k p(y|x)  \nn\\
                 &= h(y) - h(y|x) - \log_k p(y|x),    \\
    i_-(y \ra x) &= - \log_k p(y|x),
  \end{align}
  as required.
\end{proof}

\begin{proof}[Proof of Theorem~\ref{thm:decomp}]
  In the general case, both $p(\ob{x}, \ob{y})$ and $p(x, \ob{y})$ are non-zero.  Consider two
  events, $u$ and $v$, such that $y=uv$, $p(x, \ob{u}) = 0$ and $p(\ob{x},\ob{v}) = 0$.  By
  Postulate~\ref{post:chain_rule},
  \begin{align}
    \label{eq:pmass_thm_pos}
    i_+(y \ra x) = i_+(uv \ra x) &= i_+(u \ra x) + i_+(v \ra x | u),  \\
    i_-(y \ra x) = i_-(uv \ra x) &= i_-(u \ra x) + i_-(v \ra x | u),
    \label{eq:pmass_thm_neg}                               
  \end{align}
  as depicted in \fig{pmass_thm}.  Since $p(x, \ob{u}) = 0$, by Lemma~\ref{lem:pure_inf},
  ${i_+(u \ra x)} = h(u)$ and ${i_-(u \ra x)} = 0$; furthermore, we also have that $p(x) = p(x,u)$,
  and hence $p(v|xu) = p(uv|x)$. In addition, since $p(\ob{x}, \ob{v} | u) = 0$, by
  Lemma~\ref{lem:pure_mis}, we have that ${i_+(v \ra x | u)} = h(v|u) + h(v|xu) - \log_k p(v|xu)$
  and ${i_-(v \ra x | u)} = - \log_k p(v|xu)$ where $k \geq b$.  Therefore, by \eq{pmass_thm_pos}
  and \eq{pmass_thm_neg},
  \begin{align}
    i_+(y \ra x) &= h(u) + h(v|u) - h(v|xu) - \log_k p(v|xu)  \nn\\
                  &= h(y) - h(y|x) - \log_k p(y|x),              \\
    i_-(y \ra x) &= - \log_k p(v|xu)                          \nn\\
                  &= - \log_k p(y|x).                        
  \end{align}
  
  Finally, since Postulate~\ref{post:decomp} requires that $\iyxp \geq 0$, we have that
  $h(y) - h(y|x) - \log_k p(y|x) \geq 0$, or equivalently,
  \begin{equation}
    \log_b p(y) \leq \left( 1 - \dfrac{1}{\log_b k} \right) \log_b p(y|x).
  \end{equation}
  This must hold for all $p(y)$ and $p(y|x)$, which is only true in general for $b \geq k$.  Hence,
  $k=b$ and therefore
  \begin{align}
    i_+(y \ra x) &= h(y) - h(y|x) - \log_b p(y|x)  \nn\\
                 &= h(y),                             \\
    i_-(y \ra x) &= - \log_b p(y|x)                \\\
                 &= h(y|x).                        \nn \qedhere
  \end{align}
\end{proof}

\begin{corollary}
  The conditional decomposition of the information provided by $y$ about $x$ given $z$ is given by
  \begin{alignat}{3}
    &i_+(y \ra x | z) &&= h(y|z)  &&= - \log_b p(y|z),  \label{eq:conditional_specificity} \\
    &i_-(y \ra x | z) &&= h(y|xz) &&= - \log_b p(y|xz). \label{eq:conditional_ambiguity}
  \end{alignat}
\end{corollary}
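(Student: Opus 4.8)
The plan is to obtain the two conditional components directly from the chain rule (Postulate~\ref{post:chain_rule}) combined with Theorem~\ref{thm:decomp}, by treating the composite event $yz$ as a single source event that provides information about $x$. No fresh machinery is needed: once the unconditional components of $yz \ra x$ and $z \ra x$ are known, the chain rule pins down the conditional components of $y \ra x \mid z$ uniquely.

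First I would invoke the second form of the chain rule for each component to isolate the conditional quantities:
\begin{align*}
  i_+(y \ra x | z) &= i_+(yz \ra x) - i_+(z \ra x), \\
  i_-(y \ra x | z) &= i_-(yz \ra x) - i_-(z \ra x).
\end{align*}
Every term on the right-hand side is now unconditional, so Theorem~\ref{thm:decomp} applies. Using the theorem with $yz$ in the source position gives $i_+(yz \ra x) = h(yz) = -\log_b p(yz)$ and $i_-(yz \ra x) = h(yz|x) = -\log_b p(yz|x)$, while using it with the source event $z$ gives $i_+(z \ra x) = h(z) = -\log_b p(z)$ and $i_-(z \ra x) = h(z|x) = -\log_b p(z|x)$.

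Substituting these values and simplifying with the chain rule of probability then yields the claimed expressions:
\begin{align*}
  i_+(y \ra x | z) &= h(yz) - h(z)   = -\log_b \frac{p(yz)}{p(z)}   = -\log_b p(y|z)  = h(y|z), \\
  i_-(y \ra x | z) &= h(yz|x) - h(z|x) = -\log_b \frac{p(yz|x)}{p(z|x)} = -\log_b p(y|xz) = h(y|xz),
\end{align*}
where the misinformative line uses $p(yz|x)/p(z|x) = p(x,y,z)/p(x,z) = p(y|xz)$.

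The only step requiring care is the legitimacy of applying Theorem~\ref{thm:decomp} with the composite event $yz$ as the source. This is justified because the theorem characterises the informative and misinformative components of the information that \emph{any} source event carries about $x$, and $yz$ is simply an event in the product space $\mc{Y} \by \mc{Z}$; the postulates make no reference to the source being an ``atomic'' event. I therefore expect no substantive obstacle here: the argument is a two-line bookkeeping consequence of the chain rule and Theorem~\ref{thm:decomp}, and the same result could equally be obtained by re-running the entire derivation inside the conditional distribution $P(\,\cdot \mid z)$, which merely reinterprets each probability as its $z$-conditioned counterpart.
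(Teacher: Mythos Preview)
Your argument is correct, but it is not the route the paper takes. The paper's one-line proof (``Follows trivially using conditional distributions'') replaces the ambient distribution $P(\cdot)$ by $P(\cdot\mid z)$ throughout: since Postulates~\ref{post:decomp}--\ref{post:chain_rule} are stated in terms of probabilities and hold verbatim after conditioning, Theorem~\ref{thm:decomp} re-applied inside the world $P(\cdot\mid z)$ immediately yields $i_+(y\ra x\mid z)=h(y\mid z)$ and $i_-(y\ra x\mid z)=h(y\mid xz)$. Your approach instead stays in the unconditional world and combines Postulate~\ref{post:chain_rule} with Theorem~\ref{thm:decomp} applied to the composite source event $yz$. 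This has the virtue of using only already-proved unconditional statements and a purely algebraic subtraction, whereas the paper's route has the virtue of making clear that the entire theory relativises to any conditioning event. Note that in the paper's ordering the identity $i_+(yz\ra x)=h(yz)$ is recorded separately as Corollary~2; you are effectively proving that corollary first and then deducing this one, so if you present it this way you should make explicit (as you do in your final paragraph) that Theorem~\ref{thm:decomp} already covers composite source events.
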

\begin{proof}
  Follows trivially using conditional distributions.
\end{proof}

\begin{corollary}
  The joint decomposition of the information provided by $y$ and $z$ about $x$ is given by
  \begin{alignat}{3}
    &i_+(yz \ra x) &&= h(yz)   &&= - \log_b p(yz),    \\
    &i_-(yz \ra x) &&= h(yz|x) &&= - \log_b p(yz|x).
  \end{alignat}
  The joint decomposition of the information provided by $y$ about $x$ and $z$ is given by
  \begin{alignat}{3}
    &i_+(y \ra xz) &&= h(y)    &&= - \log_b p(y),     \label{eq:joint_tar_spec}  \\
    &i_-(y \ra xz) &&= h(y|xz) &&= - \log_b p(y|xz). \label{eq:joint_tar_ambig}
  \end{alignat}
\end{corollary}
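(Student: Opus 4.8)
The plan is to exploit the fact that the decomposition established in Theorem~\ref{thm:decomp} was derived for an \emph{arbitrary} source event and target event drawn from finite discrete spaces; nothing in the postulates restricts either role to a single primitive event, so either may be filled by a composite event. I would therefore prove each of the two joint decompositions by a direct reduction to the Theorem, using the chain rule of Postulate~\ref{post:chain_rule} as an independent cross-check that the regrouping is legitimate.

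For the first pair, I would regard the joint event $yz$ as a single source event $w$ living in the product space $\mc{Y} \by \mc{Z}$. Applying Theorem~\ref{thm:decomp} with this source immediately yields $i_+(w \ra x) = h(w) = -\log_b p(w)$ and $i_-(w \ra x) = h(w|x) = -\log_b p(w|x)$, which are exactly the claimed expressions once we substitute $w = yz$. To confirm that promoting $yz$ to a single source is consistent with the axioms, I would rederive the same formulas through the chain rule: writing $i_+(yz \ra x) = i_+(y \ra x) + i_+(z \ra x | y)$ and inserting $i_+(y \ra x) = h(y)$ from the Theorem together with $i_+(z \ra x | y) = h(z|y)$ from the preceding (conditional) corollary, the entropy chain rule $h(y) + h(z|y) = h(yz)$ collapses the sum to $h(yz)$. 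The misinformative component follows identically via $h(y|x) + h(z|xy) = h(yz|x)$.

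For the second pair, I would instead treat $xz$ as a single target event $t$ in $\mc{X} \by \mc{Z}$ and again invoke Theorem~\ref{thm:decomp}. The structural feature to exploit here is that the specificity component produced by the Theorem equals the pointwise entropy of the \emph{source} event alone, $i_+(y \ra x) = h(y)$, and is therefore insensitive to the target; hence $i_+(y \ra xz) = h(y) = -\log_b p(y)$ whether the target is $x$ or the composite $xz$. The ambiguity component carries the target dependence, giving $i_-(y \ra xz) = h(y|t) = -\log_b p(y|xz)$. I do not anticipate a genuine obstacle, which is precisely why the statement is a corollary; the only point demanding care is justifying that a composite event may stand in as a single source or target, and the chain-rule verification in the first part is exactly what reassures us that the grouped and ungrouped viewpoints coincide. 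An analogous chain-rule check can be appended to the second part should a fully self-contained argument be preferred.
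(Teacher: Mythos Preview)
Your proposal is correct and mirrors the paper's own argument, which simply states that the result ``follows trivially using joint distributions''---i.e., exactly your reduction of the composite source or target to a single event in the product space so that Theorem~\ref{thm:decomp} applies verbatim. The chain-rule cross-check you add is a nice consistency verification but is not required by the paper.
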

\begin{proof}
  Follows trivially using joint distributions.
\end{proof}

\begin{corollary}
  We have the following three identities, 
  \begin{alignat}{2}
    &i_+(y \ra x)     &&= i_+(y \ra z),     \label{eq:ident_spec}     \\
    &i_+(y \ra x | z) &&= i_-(y \ra z),     \label{eq:ident_spec_ambig} \\
    &i_-(y \ra x | z) &&= i_-(y \ra xz).  \label{eq:ident_ambig}
  \end{alignat}
\end{corollary}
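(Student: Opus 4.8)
The plan is to prove all three identities by a single elementary move: substitute the closed-form expressions established in Theorem~\ref{thm:decomp} and the two preceding corollaries, and observe that each side collapses to one and the same pointwise entropy. The conceptual content worth flagging is that the specificity component $i_+(y \ra \cdot)$ turns out to depend only on the source event $y$ together with any conditioning event, and never on the target event it informs about, whereas the ambiguity component $i_-$ carries all of the target dependence.

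First I would assemble the relevant formulae. From \eq{spec}--\eq{ambig} we have $i_+(y \ra x) = h(y)$ and $i_-(y \ra x) = h(y|x)$, with the same holding after relabelling the target event; from \eq{conditional_specificity}--\eq{conditional_ambiguity} we have $i_+(y \ra x | z) = h(y|z)$ and $i_-(y \ra x | z) = h(y|xz)$; and from \eq{joint_tar_spec}--\eq{joint_tar_ambig} we have $i_+(y \ra xz) = h(y)$ and $i_-(y \ra xz) = h(y|xz)$.

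With these in hand each identity is immediate. For \eq{ident_spec}, both $i_+(y \ra x)$ and $i_+(y \ra z)$ equal $h(y)$ by \eq{spec}, since the specificity is blind to the target. For \eq{ident_spec_ambig}, the left-hand side equals $h(y|z)$ by \eq{conditional_specificity}, while the right-hand side $i_-(y \ra z)$ equals $h(y|z)$ by \eq{ambig} with $z$ placed in the target slot; hence both reduce to $-\log_b p(y|z)$. For \eq{ident_ambig}, the left-hand side equals $h(y|xz)$ by \eq{conditional_ambiguity} and the right-hand side equals $h(y|xz)$ by \eq{joint_tar_ambig}, so both reduce to $-\log_b p(y|xz)$.

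I do not anticipate any genuine obstacle: once Theorem~\ref{thm:decomp} and its corollaries are in place, the identities are pure substitution. The only point requiring care is bookkeeping---checking that the target event appearing in each decomposition matches the claimed entropic form---and in particular noticing the mildly non-obvious coincidence recorded by \eq{ident_spec_ambig}, namely that conditioning a specificity on $z$ yields the very same quantity $h(y|z)$ as an \emph{unconditioned} ambiguity about $z$. This is the one equality that is not merely a relabelling, and so it is the step I would present most explicitly.
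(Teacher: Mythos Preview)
Your proposal is correct and follows essentially the same route as the paper: both argue by direct substitution of the closed forms \eq{spec}, \eq{ambig}, \eq{conditional_specificity}, \eq{conditional_ambiguity}, and \eq{joint_tar_ambig}, reducing each side of the three identities to the same pointwise entropy. The paper's own proof is in fact terser than yours, merely citing the relevant equation pairs without writing out the common value.
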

\begin{proof}
  The identity~\eq{ident_spec} follows from \eq{spec}, while \eq{ident_spec_ambig} follows from
  \eq{ambig} and \eq{conditional_specificity}; finally, \eq{ident_ambig} follows from
  \eq{conditional_ambiguity} and \eq{joint_tar_ambig}.
\end{proof}

\begin{corollary}
  The information provided by $y$ about $x$ and $z$ satisfies the following chain rule, 
  \begin{alignat}{3}
    &i(y \ra xz) &&= i(y \ra x) + i(y \ra z | x). \label{eq:chain_rule_targets}
  \end{alignat}
\end{corollary}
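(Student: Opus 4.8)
The plan is to reduce the claimed target chain rule to the closed forms already obtained, so that it becomes a telescoping cancellation rather than a fresh argument. First I would expand each unsigned directed quantity using Postulate~\ref{post:decomp}, writing $i(y \ra x) = i_+(y \ra x) - i_-(y \ra x)$, and likewise $i(y \ra xz) = i_+(y \ra xz) - i_-(y \ra xz)$ and $i(y \ra z | x) = i_+(y \ra z | x) - i_-(y \ra z | x)$. The whole proof then amounts to substituting the explicit specificity and ambiguity terms into these three differences.

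Next I would insert the explicit forms. From Theorem~\ref{thm:decomp}, \eq{spec} and \eq{ambig} give $i(y \ra x) = h(y) - h(y|x)$. From the joint-target corollary, \eq{joint_tar_spec} and \eq{joint_tar_ambig} give $i(y \ra xz) = h(y) - h(y|xz)$. Finally, applying the conditional decomposition corollary, \eq{conditional_specificity} and \eq{conditional_ambiguity}, with the roles of the target $x$ and the conditioning event $z$ interchanged, yields $i_+(y \ra z | x) = h(y|x)$ and $i_-(y \ra z | x) = h(y|xz)$, so that $i(y \ra z | x) = h(y|x) - h(y|xz)$.

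With these three expressions in hand, I would simply add the last two: the intermediate $h(y|x)$ terms cancel, leaving $i(y \ra x) + i(y \ra z | x) = h(y) - h(y|xz) = i(y \ra xz)$, as required. There is no genuine obstacle beyond bookkeeping; the only point demanding care is the relabelling used to obtain $i_\pm(y \ra z | x)$ from a corollary stated for $i_\pm(y \ra x | z)$. It is worth remarking, however, that unlike the single-target chain rule of Postulate~\ref{post:chain_rule}, this target chain rule does \emph{not} hold component-wise: neither $i_+$ nor $i_-$ telescopes on its own, since for instance $i_+(y \ra x) + i_+(y \ra z | x) = h(y) + h(y|x) \neq h(y) = i_+(y \ra xz)$. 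The identity survives only for the signed total because the spurious $h(y|x)$ contribution from the specificity of $i(y \ra z | x)$ is exactly cancelled by the ambiguity of $i(y \ra x)$.
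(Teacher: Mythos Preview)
Your argument is correct and lands on exactly the same cancellation as the paper, but the presentation differs slightly. You substitute the entropic closed forms $h(y)$, $h(y|x)$, $h(y|xz)$ directly and telescope. The paper instead routes the computation through the identities of Corollary~3: it writes $i(y\ra xz)=i_+(y\ra xz)-i_-(y\ra xz)$, replaces these via \eq{ident_spec} and \eq{ident_ambig} by $i_+(y\ra x)-i_-(y\ra z\,|\,x)$, then adds and subtracts $i_-(y\ra x)$ and invokes \eq{ident_spec_ambig} to recognise $i_-(y\ra x)=i_+(y\ra z\,|\,x)$. The two derivations are formally equivalent---the identities in Corollary~3 are nothing more than the entropic equalities you use---but the paper's phrasing is chosen to foreground \eq{ident_spec_ambig} as the mechanism by which the target chain rule survives despite failing component-wise, which is precisely the observation you make in your final paragraph.
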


\begin{proof}
  Starting from the joint decomposition \eq{joint_tar_spec} and \eq{joint_tar_ambig}.  By the
  identities \eq{ident_spec} and \eq{ident_ambig}, we get that
  \begin{align}
    i(y \ra xz)
    &= i_+(y \ra xz) - i_-(y \ra xz), \nn\\
    &= i_+(y \ra x)  - i_-(y \ra z | x), \\
    \intertext{Then, by identity \eq{ident_spec_ambig}, and recomposition, we get that}
    \begin{split}
       i(y \ra xz) &= i_+(y \ra x) - i_-(y \ra x)  \\
      &\qquad + i_-(y \ra x) - i_-(y \ra z | x),
    \end{split} \nn\\
    \begin{split}
      &= i_+(y \ra x) - i_-(y \ra x)  \\
      &\qquad + i_+(y \ra z | x) - i_-(y \ra z | x),
    \end{split}\\
    &= i(y \ra x) + i( y \ra z | x). \nn \qedhere
  \end{align}
\end{proof}

Note that, in general, it is not true that ${i_+(y \ra xz)} = {i_+(y \ra x)} + {i_+(y \ra z | x)}$,
nor is it true that ${i_-(y \ra xz)} = {i_-(y \ra x)} + {i_-(y \ra z | x)}$.  Hence, although not
unexpected, it is interesting to see how the chain rule \eq{chain_rule_targets} is satisfied---the
key observation is that the positive informational component provided by $y$ about $z$ given $x$
equals the negative informational component provided by $y$ about $z$, as per \eq{ident_spec_ambig}.

In summary, the unique forms satisfying \mbox{Postulates~\ref{post:decomp}--\ref{post:chain_rule}}
are
\begin{alignat}{3}
  &i_+(y \ra x)         &&= h(y)       &&= -\log_b p(y),     \\
  &i_+(y \ra x | z )    &&= h(y|z)     &&= -\log_b p(y|z),   \\
  &i_+(yz \ra x)        &&= h(yz)      &&= -\log_b p(yz),   \\
  &i_-(y \ra x)         &&= h(y|x)     &&= -\log_b p(y|x),   \\
  &i_-(y \ra x | z ) \; &&= h(y|xz) \; &&= -\log_b p(y|xz), \\
  &i_-(yz \ra x)        &&= h(yz|x)    &&= -\log_b p(yz|x).
\end{alignat}
That is, \mbox{Postulates~\ref{post:decomp}--\ref{post:chain_rule}} decompose the pointwise
information provided by $y$ about $x$ into
\begin{align}
  \label{eq:decomp}
  i(x; y) &= \iyxp - \iyxn   \nn\\
             &= h(y) - h(y|x).  
\end{align}

\section{Discussion}
\label{sec:discussion}

Clearly, the decomposition \eq{decomp} is a well-known result, especially with regards to the
(average) mutual information.  Nonetheless, it is non-trivial that considering the pointwise mutual
information in terms of the exclusions induced by $y$ in $P(X)$ should lead to this decomposition as
opposed to the decomposition $i(x;y) = h(x) - h(x|y)$.  Indeed, this latter form is more typically
used when considering information provided by $y$ about $x$, since it states that this information
is equal to the difference between the entropy of the prior $p(x)$ and the entropy of the posterior
$p(x|y)$.  Despite this, Postulates~\ref{post:decomp}--\ref{post:chain_rule} mandate the use of the former decomposition \eq{decomp},
rather than the latter.

Recall the motivational question from Section~\ref{sec:prob_mass_exc} which asked if was possible to
decompose the pointwise information into an informative and misinformative component, each
associated with one type of exclusion.  As can be seen from \eq{spec_exc} and \eq{ambig_exc}, the
unique functions derived from the exclusions do not quite possess this precise functional
independence---although the negative informational component only depends on the size of the
misinformative exclusion $p(x,\ob{y})$, the positive component depends on the size of both the
informative exclusion $p(\ob{x},\ob{y})$ and the misinformative exclusion $p(x,\ob{x})$.  That is,
since ${p(\ob{y}) = p(x,\ob{y}) + p(\ob{x},\ob{y})}$, the positive component $\iyxp$ depends on the
total size of the exclusions induced by $y$ and hence has no functional dependence on $x$, or indeed
$X$.  Thus, $\iyxp$ quantifies the \emph{specificity} of the event $y$:\ the less likely $y$ is to
occur, the greater the total amount of probability mass excluded and therefore the greater the
potential for $y$ to inform about $x$.  On the other hand, the negative component $\iyxn$ quantifies
the \emph{ambiguity} of $y$ given $x$:\ the less likely $y$ is to coincide with the event $x$, the
greater the misinformative probability mass exclusion and therefore the greater the potential for
$y$ to misinform about $x$.  This asymmetry in the functional dependence can be seen in the two
special cases.  Decomposing the pointwise mutual information for a purely informative exclusion
yields
\begin{align}
  \label{eq:pure_inf_decomp}
  i(x;y) &= \iyxp - \iyxn                          \nn\\
         &= -\log_b \big(1 - p(\ob{x},\ob{y})\big),
\end{align}
i.e.\ only the positive informational component is non-zero.  (Note that \eq{pure_inf} is
recovered.)  On the other hand, decomposing the pointwise mutual information for a purely
informative exclusion yields
\begin{align}
  \label{eq:pure_mis_decomp}
   i(x;y) &= \iyxp - \iyxn  \nn\\
          &= - \log_b \big(1 - p(x,\ob{y})\big) + \log_b \bigg(\!1-\frac{p(x,\ob{y})}{p(x)}\bigg),  
\end{align}
i.e.\ both the positive and negative informational components are non-zero.  (Note that
\eq{pure_mis} is recovered).  Nevertheless, despite both terms being non-zero, it is clear that
${\iyxp < \iyxn}$ and hence $i(x;y) < 0$.

Now as to why one should be interested in considering information in terms of exclusions---recently,
there has been a concerted effort to quantify the shared or redundant information contained in a set
of variables about one or more target variables.  There has been particular interest focusing around
a proposed axiomatic framework for decomposing multivariate information called the partial
information decomposition \cite{williams2010}.  (There are a substantial number of publications
following on from this paper, see \cite{finn17b} and references therein.)  However, flaws have been
identified in this approach regarding ``whether different random variables carry \emph{the same}
information or just \emph{the same amount} of information''~\cite{bertschinger2013} (see also
\cite{harder2013}).  In \cite{finn17b}, exclusions are utilised to provide an operational definition
of when the events $y$ and $z$ provide the same information about $x$.  Specifically, the
information is deemed to be the same information when the events $y$ and $z$ provide the same
probability mass exclusions in $P(X)$ with respect to the event $x$.  To motivate why this approach
is appealing, consider the situation depicted in the probability mass diagram in
\fig{pmass_spec_amb} where $i(x_1; y_1) = i(x_1; z_1) = \log_2 \nf{4}{3} \text{ bit}$, but yet
\begin{align}
  i_+(y_1 \ra x_1) &= \log_2 \nf{8}{3} \text{ bit}, & i_-(y_1 \ra x_1) &= 1 \text{ bit}, \nn\\
  i_+(z_1 \ra x_1) &= \log_2 \nf{4}{3} \text{ bit}, & i_-(z_1 \ra x_1) &= 0 \text{ bit}.
\end{align}
Although the net amount of information provided by $y$ and $z$ is the same, it is in some way
different since $y$ and $z$ are different in terms of exclusions.  However, this is not the subject
of this paper---those who are interested in the operational definition of shared information based
on redundant exclusions should see \cite{finn17b}.

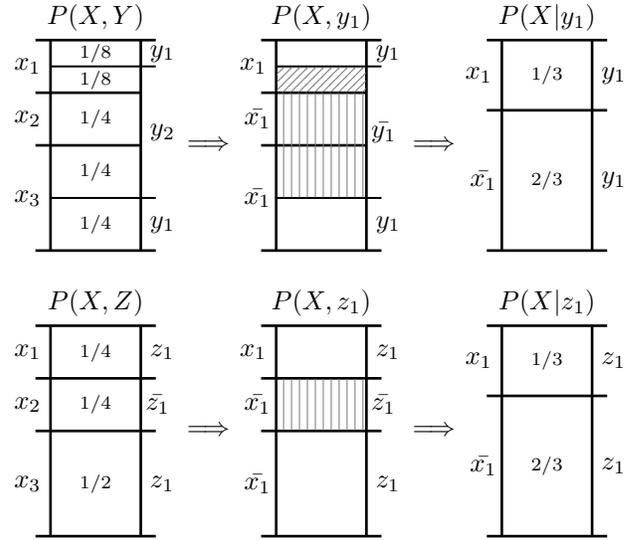
\begin{figure}[t]
  \centering
  \begin{tikzpicture}

  \def\originy{3.8} 
  \def\height{2.8}
  \def\width{1.2}
  \def\overdrawn{0.2}
  \def\xgap{3}

  
  \def\originx{0}
  
  \draw[black, line width=1pt] (\originx,\originy)
    -- (\originx,\originy+\height);
  \draw[black, line width=1pt] (\originx-\overdrawn,\originy+\height)
    -- (\originx+\width+\overdrawn,\originy+\height);
  \draw[black, line width=1pt] (\originx+\width,\originy+\height)
    -- (\originx+\width,\originy);
  \draw[black, line width=1pt] (\originx+\width+\overdrawn,\originy)
    -- (\originx-\overdrawn,\originy);

  \draw[black, line width=0.75pt] (\originx,\originy+7/8*\height)
    -- (\originx+\width+\overdrawn,\originy+7/8*\height);
  \draw[black, line width=0.75pt] (\originx,\originy+1/4*\height)
    -- (\originx+\width+\overdrawn,\originy+1/4*\height);
  
  \draw[black, line width=1pt] (\originx-\overdrawn,\originy+3/4*\height)
    -- (\originx+\width,\originy+3/4*\height);
  \draw[black, line width=1pt] (\originx-\overdrawn,\originy+\height/2)
    -- (\originx+\width,\originy+\height/2);

  \node at (\originx,\originy+7/8*\height)[anchor=east] {$x_1$};
  \node at (\originx,\originy+5/8*\height)[anchor=east] {$x_2$};
  \node at (\originx,\originy+1/4*\height)[anchor=east] {$x_3$}; 
  \node at (\originx+\width,\originy+15/16*\height)[anchor=west] {$y_1$};
  \node at (\originx+\width,\originy+9/16*\height)[anchor=west] {$y_2$};
  \node at (\originx+\width,\originy+1/8*\height)[anchor=west] {$y_1$};
  \node at (\originx+\width/2,\originy+\height)[anchor=south] {$P(X,Y)$};
  \node at (\originx+\width/2,\originy+15/16*\height)[anchor=center] {$\scriptstyle 1/8$};
  \node at (\originx+\width/2,\originy+13/16*\height)[anchor=center] {$\scriptstyle 1/8$};
  \node at (\originx+\width/2,\originy+5/8*\height)[anchor=center] {$\scriptstyle 1/4$};
  \node at (\originx+\width/2,\originy+3/8*\height)[anchor=center] {$\scriptstyle 1/4$};
  \node at (\originx+\width/2,\originy+1/8*\height)[anchor=center] {$\scriptstyle 1/4$};

  
  \def\originxtwo{\xgap}
  
  \draw[black, line width=1pt] (\originxtwo,\originy)
    -- (\originxtwo,\originy+\height);
  \draw[black, line width=1pt] (\originxtwo-\overdrawn,\originy+\height)
    -- (\originxtwo+\width+\overdrawn,\originy+\height);
  \draw[black, line width=1pt] (\originxtwo+\width,\originy+\height)
    -- (\originxtwo+\width,\originy);
  \draw[black, line width=1pt] (\originxtwo+\width+\overdrawn,\originy)
    -- (\originxtwo-\overdrawn,\originy);

  \draw[black, line width=0.75pt] (\originxtwo,\originy+7/8*\height)
    -- (\originxtwo+\width+\overdrawn,\originy+7/8*\height);
  \draw[black, line width=0.75pt] (\originxtwo,\originy+1/4*\height)
    -- (\originxtwo+\width+\overdrawn,\originy+1/4*\height);
  
  \draw[black, line width=1pt] (\originxtwo-\overdrawn,\originy+3/4*\height)
    -- (\originxtwo+\width,\originy+3/4*\height);
  \draw[black, line width=1pt] (\originxtwo-\overdrawn,\originy+\height/2)
    -- (\originxtwo+\width,\originy+\height/2);

  \draw[pattern=vertical lines, pattern color=gray] (\originxtwo,\originy+\height/4) rectangle
    (\originxtwo+\width,\originy+3/4*\height);
  \draw[pattern=north east lines, pattern color=gray] (\originxtwo,\originy+3/4*\height) rectangle
    (\originxtwo+\width,\originy+7/8*\height);

  \node at (\originxtwo,\originy+7/8*\height)[anchor=east] {$x_1$};
  \node at (\originxtwo,\originy+5/8*\height)[anchor=east] {$\ob{x_1}\!$};
  \node at (\originxtwo,\originy+1/4*\height)[anchor=east] {$\ob{x_1}\!$}; 
  \node at (\originxtwo+\width,\originy+15/16*\height)[anchor=west] {$y_1$};
  \node at (\originxtwo+\width,\originy+9/16*\height)[anchor=west] {$\!\ob{y_1}$};
  \node at (\originxtwo+\width,\originy+1/8*\height)[anchor=west] {$y_1$};
  \node at (\originxtwo+\width/2,\originy+\height)[anchor=south] {$P(X,y_1)$};


  \def\originxthree{2*\xgap}
    
  \draw[black, line width=1pt] (\originxthree,\originy)
    -- (\originxthree,\originy+\height);
  \draw[black, line width=1pt] (\originxthree-\overdrawn,\originy+\height)
    -- (\originxthree+\width+\overdrawn,\originy+\height);
  \draw[black, line width=1pt] (\originxthree+\width,\originy+\height)
    -- (\originxthree+\width,\originy);
  \draw[black, line width=1pt] (\originxthree+\width+\overdrawn,\originy)
    -- (\originxthree-\overdrawn,\originy);

  \draw[black, line width=1pt] (\originxthree-\overdrawn,\originy+2/3*\height)
    -- (\originxthree+\width+\overdrawn,\originy+2/3*\height); 

  \node at (\originxthree,\originy+5/6*\height)[anchor=east] {$x_1$};
  \node at (\originxthree,\originy+1/3*\height)[anchor=east] {$\ob{x_1}\!$};
  \node at (\originxthree+\width,\originy+5/6*\height)[anchor=west] {$y_1$};
  \node at (\originxthree+\width,\originy+1/3*\height)[anchor=west] {$y_1$}; 
  \node at (\originxthree+\width/2,\originy+\height)[anchor=south] {$P(X|y_1)$};
  \node at (\originxthree+\width/2,\originy+5/6*\height)[anchor=center] {$\scriptstyle 1/3$};
  \node at (\originxthree+\width/2,\originy+1/3*\height)[anchor=center] {$\scriptstyle 2/3$};


  \node at (\originxtwo-\xgap/2+\width/2, \originy+1/2*\height)[anchor=center] {$\implies$};
  \node at (\originxthree-\xgap/2+\width/2, \originy+1/2*\height)[anchor=center] {$\implies$};


  \def\originytwo{0} 
  
  \draw[black, line width=1pt] (\originx,\originytwo)
    -- (\originx,\originytwo+\height);
  \draw[black, line width=1pt] (\originx-\overdrawn,\originytwo+\height)
    -- (\originx+\width+\overdrawn,\originytwo+\height);
  \draw[black, line width=1pt] (\originx+\width,\originytwo+\height)
    -- (\originx+\width,\originytwo);
  \draw[black, line width=1pt] (\originx+\width+\overdrawn,\originytwo)
    -- (\originx-\overdrawn,\originytwo);

  \draw[black, line width=1pt] (\originx-\overdrawn,\originytwo+3/4*\height)
    -- (\originx+\width+\overdrawn,\originytwo+3/4*\height);
  \draw[black, line width=1pt] (\originx-\overdrawn,\originytwo+\height/2)
    -- (\originx+\width+\overdrawn,\originytwo+\height/2);

  \node at (\originx,7/8*\height)[anchor=east] {$x_1$};
  \node at (\originx,5/8*\height)[anchor=east] {$x_2$};
  \node at (\originx,1/4*\height)[anchor=east] {$x_3$}; 
  \node at (\originx+\width,7/8*\height)[anchor=west] {$z_1$};
  \node at (\originx+\width,5/8*\height)[anchor=west] {$\!\ob{z_1}$};
  \node at (\originx+\width,1/4*\height)[anchor=west] {$z_1$};
  \node at (\originx+\width/2,\originytwo+\height)[anchor=south] {$P(X,Z)$};
  \node at (\originx+\width/2,\originytwo+7/8*\height)[anchor=center] {$\scriptstyle 1/4$};
  \node at (\originx+\width/2,\originytwo+5/8*\height)[anchor=center] {$\scriptstyle 1/4$};
  \node at (\originx+\width/2,\originytwo+1/4*\height)[anchor=center] {$\scriptstyle 1/2$};

    
  \draw[black, line width=1pt] (\originxtwo,\originytwo)
    -- (\originxtwo,\originytwo+\height);
  \draw[black, line width=1pt] (\originxtwo-\overdrawn,\originytwo+\height)
    -- (\originxtwo+\width+\overdrawn,\originytwo+\height);
  \draw[black, line width=1pt] (\originxtwo+\width,\originytwo+\height)
    -- (\originxtwo+\width,\originytwo);
  \draw[black, line width=1pt] (\originxtwo+\width+\overdrawn,\originytwo)
    -- (\originxtwo-\overdrawn,\originytwo);

  \draw[black, line width=1pt] (\originxtwo-\overdrawn,\originytwo+3/4*\height)
    -- (\originxtwo+\width+\overdrawn,\originytwo+3/4*\height);
  \draw[black, line width=1pt] (\originxtwo-\overdrawn,\originytwo+\height/2)
    -- (\originxtwo+\width+\overdrawn,\originytwo+\height/2);

  \draw[pattern=vertical lines, pattern color=gray] (\originxtwo,\originytwo+\height/2) rectangle
    (\originxtwo+\width,\originytwo+3/4*\height);

  \node at (\originxtwo,7/8*\height)[anchor=east] {$x_1$};
  \node at (\originxtwo,5/8*\height)[anchor=east] {$\ob{x_1}\!$};
  \node at (\originxtwo,1/4*\height)[anchor=east] {$\ob{x_1}\!$}; 
  \node at (\originxtwo+\width,7/8*\height)[anchor=west] {$z_1$};
  \node at (\originxtwo+\width,5/8*\height)[anchor=west] {$\!\ob{z_1}$};
  \node at (\originxtwo+\width,1/4*\height)[anchor=west] {$z_1$};
  \node at (\originxtwo+\width/2,\originytwo+\height)[anchor=south] {$P(X,z_1)$};

    
  \draw[black, line width=1pt] (\originxthree,\originytwo)
    -- (\originxthree,\originytwo+\height);
  \draw[black, line width=1pt] (\originxthree-\overdrawn,\originytwo+\height)
    -- (\originxthree+\width+\overdrawn,\originytwo+\height);
  \draw[black, line width=1pt] (\originxthree+\width,\originytwo+\height)
    -- (\originxthree+\width,\originytwo);
  \draw[black, line width=1pt] (\originxthree+\width+\overdrawn,\originytwo)
    -- (\originxthree-\overdrawn,\originytwo);

  \draw[black, line width=1pt] (\originxthree-\overdrawn,\originytwo+2/3*\height)
    -- (\originxthree+\width+\overdrawn,\originytwo+2/3*\height); 

  \node at (\originxthree,5/6*\height)[anchor=east] {$x_1$};
  \node at (\originxthree,1/3*\height)[anchor=east] {$\ob{x_1}\!$};
  \node at (\originxthree+\width,5/6*\height)[anchor=west] {$z_1$};
  \node at (\originxthree+\width,1/3*\height)[anchor=west] {$z_1$}; 
  \node at (\originxthree+\width/2,\originytwo+\height)[anchor=south] {$P(X|z_1)$};
  \node at (\originxthree+\width/2,\originytwo+5/6*\height)[anchor=center] {$\scriptstyle 1/3$};
  \node at (\originxthree+\width/2,\originytwo+1/3*\height)[anchor=center] {$\scriptstyle 2/3$};
 

  \node at (\originxtwo-\xgap/2+\width/2, \originytwo+1/2*\height)[anchor=center] {$\implies$};
  \node at (\originxthree-\xgap/2+\width/2, \originytwo+1/2*\height)[anchor=center] {$\implies$};
  
\end{tikzpicture}
  \caption{\emph{Top}:~probability mass diagram for $\mc{X} \by \mc{Y}$.  \emph{Bottom}:~probability
    mass diagram for $\mc{X} \by \mc{Z}$.  Note that the events $y_1$ and $z_1$ can induce different
    exclusions in $P(X)$ and yet still yield the same conditional distributions
    ${P(X|y_1)=P(X|z_1)}$ and hence provide the same amount of information ${i(x_1;y_1)=i(x_1;z_1)}$
    about the event $x_1$.  }
  \label{fig:pmass_spec_amb}
\end{figure}

\section*{Acknowledgements}

JL was supported through the Australian Research Council DECRA grant DE160100630.  We thank Mikhail
Prokopenko, Nathan Harding, Nils Bertschinger, and Nihat Ay for helpful discussions relating to this
manuscript.  We especially thank Michael Wibral for some of our earlier discussions regarding
information and exclusions.  Finally, we would like to thank the anonymous ``Reviewer 2'' of
\cite{finn17b} for their helpful feedback regarding this paper.

\ifCLASSOPTIONcaptionsoff
  \newpage
\fi

\bibliographystyle{IEEEtran}
\bibliography{decomp_local_mi_inf_mis_contrib}

%
%
%




\end{document}